\newtheorem{theorem}{Theorem}[section]
\newtheorem{corollary}[theorem]{Corollary}
\newtheorem{lemma}[theorem]{Lemma}
\newtheorem{proposition}[theorem]{Proposition}
\theoremstyle{definition}
\newtheorem{definition}[theorem]{Definition}
\newtheorem{example}[theorem]{Example}
\newcommand{\LLL}{\mathfrak{L}}
\newcommand{\RRR}{\mathfrak{R}}
\newcommand{\ARRR}{\mathfrak{R}^t}
\newcommand{\IARRR}{\mathfrak{R}_{\infty}^{t}}
\newcommand{\PARRR}{\mathfrak{R}_\textnormal{per}^{t}}
\newcommand{\TTTSSS}{\mathbf{TS}}
\newcommand{\ta}{\mathrm{TA}}
\newcommand{\dta}{\mathrm{DTA}}
\newcommand{\nta}{\mathrm{NTA}}
\newcommand{\ntaeps}{\mathrm{eNTA}}
\newcommand{\Actions}{\Sigma}
\newcommand{\eActions}{\Sigma_{\epsilon}}
\newcommand{\Locs}{\mathcal{Q}}
\newcommand{\Clocks}{\mathcal{C}}
\newcommand{\ResetClocks}{\mathcal{C}_{rst}}
\newcommand{\clk}{c}
\newcommand{\Trans}{\mathcal{T}}
\newcommand{\Guards}{\mathcal{G}}
\newcommand{\ClockVal}{\mathcal{V}}
\newcommand{\zerov}{\textbf{0}}
\newcommand{\Naturals}{\mathbb{N}}
\newcommand{\ZNaturals}{\mathbb{N}_0}
\newcommand{\Integers}{\mathbb{Z}}
\newcommand{\PReals}{\mathbb{R}_{\geq 0}}
\newcommand{\frc}[1]{\left\{{#1}\right\}}
\newcommand{\lcm}{\mathrm{lcm}}
\newcommand{\powerset}[1]{\mathcal P \left({#1}\right) }
\begin{document}

\title{The Timestamp of Timed Automata}
\author{Amnon Rosenmann}
\address{Graz University of Technology, Steyrergasse 30, A-8010 Graz, Austria}
\email{rosenmann@math.tugraz.at}


\date{}
\maketitle

\begin{abstract}
Let $\ntaeps$ be the class of non-deterministic timed automata with silent transitions. Given $A \in \ntaeps$, we effectively compute its timestamp: the set of all pairs (time value, action) of all observable timed traces of $A$. We show that the timestamp is eventually periodic and that one can compute a simple deterministic timed automaton with the same timestamp as that of $A$. As a consequence, we have a partial method, not bounded by time or number of steps, for the general language non-inclusion problem for $\ntaeps$. We also show that the language of $A$ is periodic with respect to suffixes. 
\end{abstract}
\section{Introduction}
\label{sec:intro}
Timed automata ($\ta$) are finite automata extended with clocks that measure the time that elapsed since past events in order to control the triggering of future events.
They were defined by Alur and Dill in their seminal paper \cite{ta} as abstract models of real-time systems and were implemented in tools like UPPAAL \cite{uppaal}, Kronos \cite{Kronos98}, RED \cite{Wang04} and PRISM\cite{Prism11}.

A fundamental problem in this area is the reachability problem, which in its basic form asks whether a given location of a timed automaton is reachable from the initial location.
The set of states of the system (i.e., locations and valuation to the clocks) is, in general, an infinite uncountable set.
However, through the construction of a region automaton, which contains finitely-many equivalence classes of regions \cite{ta}, the reachability problem becomes a decidable problem (though of complexity PSPACE-complete).

Research on the reachability problem went beyond the above basic question.
In \cite{CY92} it is shown that the problem of the minimum and maximum reachability time is also PSPACE-complete.
In another work, \cite{CJ99}, which is more of a theoretical nature, the authors show that some problems on the relations between states may be defined in the decidable theory of the domain of real numbers equipped with the addition operation.
In particular, the reachability problem between any two states is decidable.
For other aspects of the reachability problem, also in the context of variants and extensions of timed automata (e.g. with game and probability characteristics) we refer to \cite{CY92},\cite{member-ta-ha}, \cite{TY01},  \cite{WZP03}, \cite{control_ta}, \cite{HP06}, \cite{CHKM11}, \cite{HOW12}.
In this paper we generalize the reachability problem in another direction.
We show that the problem of computing the set of all time values on which any observable transition occurs (and thus, a location is reached by an observable transition) is solvable.
This set, called the \emph{timestamp} of the automaton $A$ and denoted $\TTTSSS(A)$, is more precisely defined to be the set of all pairs $(t, a)$ that appear in the observable timed traces of $A$.
Note that for this definition it does not matter whether we consider infinite runs or finite ones.

We show that the timestamp is in the form of a union of action-labeled open intervals with integral end-points, and action-labeled points of integral values.
When the timestamp is unbounded in time then it is eventually periodic.
 
The set of languages defined by the class $\dta$ of deterministic timed automata is strictly included in the set of languages defined by the class $\nta$ of non-deterministic timed automata \cite{ta}, \cite{Finkel06}, and the latter is strictly included in the set of languages defined by the class $\ntaeps$ of non-deterministic timed automata with silent transitions \cite{ta-eps}.
The fundamental problem of inclusion of the language accepted by a timed automaton $A$ (e.g. the implementation) in the language accepted by the timed automaton $B$ (e.g. the specification) is undecidable for the class $\nta$ but decidable for the class $\dta$.
On the other hand, for special sub-classes or modifications it was shown that decidability exists (see \cite{bbbb,ta-eps,era,updatable-ta,Ouak_one_clk,Ouak_time_bound,bounded-time,Dec_TA_Survey,bound-det} for a partial list).
However, the abstraction (or over-approximation) represented in the form of a timestamp is a discrete object, in which questions like inclusion of timestamps or universality are decidable.
In fact, we show that for any given non-deterministic timed automaton with silent transitions, one can construct a simple deterministic timed automaton having the same timestamp. 

The computation of the timestamp is done through the construction of a periodic augmented region automaton $\PARRR(A)$.
It is a region automaton augmented with a global non-resetting clock $t$ and containing periodic regions and periodic transitions: they are defined modulo a time period $L \in \Naturals$.
This kind of abstraction demonstrates a periodic nature which is absent, in general, from timed traces: there are timed automata with no timed traces that are eventually periodic (see Example~\ref{ex:non-period}).
Periodic transitions were introduced in \cite{periodic}, where it was shown that they increase the expressiveness of $\dta$, though they are less expressive than silent transitions.

The construction of the periodic automaton is preceded by defining the infinite augmented region automaton $\IARRR(A)$, in which the values of the clock $t$ are unbounded.
Then, after exhibiting the existence of a pattern that repeats itself every $L$ time units, we fold the infinite automaton into a finite one according to this periodic structure.

Our construction shows that the language of a timed automaton $A \in \ntaeps$ is periodic with respect to suffixes: for every run $\varrho$ with suffix $\varsigma$ that occurs after passing a fixed computable time there are infinitely-many runs of $A$ with the same suffix $\varsigma$, but with the suffix shifted in time by multiples of $L$.
Note that this result does not follow from the pumping lemma, which does not hold in general in timed automata \cite{pumping}. 

In Section~\ref{sec:ta} basic definitions concerning timed automata are given.
Then, in Section~\ref{sec:sing_path} we describe the trail and timestamp of a single path of a timed automaton, more from a geometric than from an algebraic point of view, after treating the absolute-time clock $t$ as part of the system.
The augmented and infinite augmented region automaton, $\ARRR(A)$ and $\IARRR(A)$, are presented in Section~\ref{sec:IARA}, and then, in Section~\ref{sec:per}, we explore the time-periodicity in them, so that $\IARRR(A)$ can be folded into the finite periodic augmented region automaton $\PARRR(A)$ (Section~\ref{sec:APARA}).
In last section (Section~\ref{sec:timestamp}) we construct the entire eventually periodic timestamp.
As for the general language inclusion problem in $\ntaeps$, the timestamp, or better - the more informative automaton $\PARRR$, may serve as a tool in demonstrating the non-inclusion relation between the languages of two members of $\ntaeps$s.
\section{Timed Automata with Silent Transitions}
\label{sec:ta}
A timed automaton is an abstract model aiming at capturing the temporal behavior of real-time systems.
It is a finite automaton extended with a finite set of clocks defined over $\PReals$, the set of non-negative real numbers.
It consists of a finite set of \emph{locations} $q$ with a finite set of \emph{transitions} $\tau$ between the locations, while time, measured by the clocks, is continuous.
A transition at time $t$ can occur only if the condition expressed as a \emph{transition guard}
is satisfied at $t$.
The transition is immediate - no clock is advancing in time. However, some of the clocks may be reset to zero.

There are two sorts of transitions: \emph{observable} transitions, which can be traced by an outside observer, and \emph{silent} transitions, which are inner transitions and thus cannot be observed from the outside.
There are finitely-many types of observable transitions, each type labeled by a unique \emph{action} $a \in \Actions$, whereas all the silent transitions have the same label $\epsilon$.
In $\nta$, the class of non-deterministic timed automata, there exist states in which two transitions from the same location $q$ can be taken at the same time and with the same action but to two different locations $q'$ and $q''$.
When this situation cannot happen, the TA is deterministic.

Let $\ZNaturals := \Naturals \cup \{0\}$ and let $\powerset{S}$ be the power set of a set $S$.
A transition guard is a conjunction of constraints of the form $\clk \sim n$, where $\clk$ is a clock, $\sim \ \in \{<,\leq, =,\geq, >\}$ and $n \in \ZNaturals$.
A formal definition of $\ntaeps$ is as follows.
\begin{definition}[$\ntaeps$]
\label{def:ntaeps}
A \emph{non-deterministic timed automaton with silent transitions} $A \in \ntaeps$ is a tuple $(\Locs, q_0, \eActions, \Clocks,
\Trans)$, where:
\begin{enumerate}
\item $\Locs$ is a finite set of locations and $q_0$ is the initial location;
\item $\eActions = \Sigma \cup \{\epsilon\}$ is a finite set of transition labels, called actions, where $\Actions$ refers to the observable actions and $\epsilon$ represents a silent transition;
\item $\Clocks$ is a finite set of clock variables;
\item $\Trans \subseteq \Locs \times \eActions \times \Guards \times \powerset{\Clocks} \times \Locs$ is a finite set of transitions of the form $(q, a, g, \ResetClocks, q')$, where:
\begin{enumerate}
\item $q,q' \in \Locs$ are the source and the target locations respectively;
\item $a \in \eActions$ is the transition action;
\item $g \in  \Guards$ is the \emph{transition guard};
\item $\ResetClocks \subseteq \Clocks$ is the subset of clocks to be reset.
\end{enumerate}
\end{enumerate}
\end{definition}

A clock \emph{valuation} $v$ is a function $v:\Clocks \to \PReals$. 
We denote by $\ClockVal$ the set of all clock valuations and by
$\textbf{d}$ the valuation which assigns the value $d$ to every clock.
Given a valuation $v$ and $d \in \PReals$, we define $v+d$ to be the valuation $(v+d)(\clk) := v(\clk)+d$ for every $ \clk \in \Clocks$.
The valuation $v[\ResetClocks]$, $\ResetClocks \subseteq \Clocks$, is defined to be $v[\ResetClocks](c) = 0$ for $c \in \ResetClocks$ and $v[\ResetClocks](c) = v(c)$ for $c \notin \ResetClocks$.

The \emph{semantics} of $A \in \ntaeps$ is given by the \emph{timed transition system}
$\llbracket A \rrbracket = (S, s_0, \PReals, \eActions, T)$, where:
\begin{enumerate}
\item $S = \{(q,v) \in \Locs \times \ClockVal \}$ is the set of states, with
$s_0 = (q_0, \zerov)$ the initial state;
\item $T \subseteq S \times (\eActions \cup \PReals) \times S$ is the transition relation.
The set $T$ consists of
\begin{enumerate}
\item \emph{Timed transitions (delays):} $(q,v) \xrightarrow{d} (q, v+d)$, where $d \in \PReals$;
\item \emph{Discrete transitions (jumps):} $(q,v) \xrightarrow{a} (q',v')$, where $a \in \eActions$ and there exists a transition $(q, a, g, \ResetClocks, q')$ in $\Trans$, such that 
for each clock $c$, $v(c)$ satisfies the constraints of $g$ regarding $c$, and $v' = v[\ResetClocks]$.
\end{enumerate}
\end{enumerate}

A (finite) \emph{run} $\varrho$ of $A \in \ntaeps$ is a sequence of alternating timed and discrete transitions of the form 
$$(q_{0}, \zerov) \xrightarrow{d_{1}} (q_{0}, \textbf{d}_1) \xrightarrow{a_{1}} (q_{1}, v_{1}) \xrightarrow{d_{2}} \cdots 
\xrightarrow{d_{k}} (q_{k-1}, v_{k-1} + d_{k}) \xrightarrow{a_{k}} (q_{k}, v_{k})
$$
and \emph{duration} $T=\sum_{j=1}^{k}d_j$.
The run $\varrho$ of $A$ induces the \emph{timed trace} (\emph{timed word})
$$
\lambda = (t_{1}, a_{1}), (t_{2}, a_{2}), \ldots, (t_{k}, a_{k}),
$$
with $a_i \in \eActions$ and $t_{i} = \Sigma_{j=1}^{i} d_j$. 
From the latter we can extract the \emph{observable timed trace} (\emph{observable timed word}), which is obtained by deleting from $\lambda$ all the pairs containing silent transitions.
%
Note that when the TA is deterministic then each timed trace refers to a unique run.
We remark that we did not include for a location $q$ the location invariants in the definition of timed automata since these invariants can be incorporated in the guards of the transitions to $q$ (for the clocks that are not reset at the transitions) and in those emerging from $q$.
We also do not distinguish between accepting and non-accepting locations as they do not change the analysis and results concerning the reachability problems that are dealt with here.
Thus, the \emph{language} $\LLL(A)$ of $A$ refers here to the set of observable timed traces of $A$ without restricting it to those observable timed traces of runs that end in acceptable locations.
\section{The Trail and Timestamp of a Single Path}
\label{sec:sing_path}
In this section we describe the trail and timestamp of a single path of a TA.
Given a timed automaton $A \in \ntaeps$ over $s$ clocks $x_1, \ldots, x_s$, we add to it a non-resetting global clock $t$ that displays absolute time.
A finite \emph{path} in $A$ has the form $\gamma = q_0 \tau_1 q_1 \tau_2 \cdots \tau_n q_n$ of alternating locations and transitions, with $q_0$ the initial location and $\tau_i$ a transition between $q_{i-1}$ and $q_i$, $i=1,\ldots,n$, that is, a path here refers to the standard definition in a directed graph.
A run of the TA induces a \emph{trajectory} in the non-negative part of the $t x_1 \cdots x_s$-space that is a piecewise-linear curve (the discontinuity is the clocks reset).
\begin{definition}[Trajectory of a run]
	Let $\{ t, x_1, \ldots, x_s \}$ be an ordered set of clocks of $A \in \ntaeps$. 
	Let $\varrho$ be a run of duration $T$ of $A$.
	The \emph{trajectory} of $\varrho$ is the set of points $(t, x_1, \ldots, x_s)$ in the $t x_1 \cdots x_s$-space visited during $\varrho$, where $0 \leq t \leq T$.
\end{definition}
Next, we define the \emph{trail} of a path.
%
%
\begin{definition}[Trail of a path]
	The \emph{trail of a path} $\gamma$
	is the union of the trajectories of all feasible runs along $\gamma$, that is, runs that follow the locations and discrete transitions of $\gamma$.
\end{definition}
%
The \emph{trail legs}, the parts of the trail between clocks reset, are in the form of \emph{zones} \cite{zones}, a conjunction of diagonal constraints $x_i - x_j < n_{ij}$ or $x_i - x_j \leq n_{ij}$, $n_{ij} \in \Integers$, bounded by transition constraints $x_i \sim n_i$, where $\sim \ \in \{<,\leq, =,\geq, >\}$, $n_i \in \ZNaturals$.
Each trail leg can be further partitioned into \emph{simplicial trails}, which are (possibly unbounded) parallelotopes consisting of a sequence of \emph{regions} \cite{ta} arranged along the directional vector ${\bf 1} = (1,1, \ldots, 1)$.
Each region ${\bf n} + \Delta$ is in the form of an open (unless it is a point) simplex $\Delta$ that is a hyper-triangle of dimension $0 \leq d \leq s+1$.
The simplex $\Delta$ is characterized by the fractional values $\{x_i\}$ of the clock variables, and each point in the simplex satisfies the same fixed ordering of the form
\begin{equation}
\label{eq:simplex}
0 \preceq_1 \frc{x_{i_1}} \preceq_2 \frc{x_{i_2}} \preceq_3 \cdots \preceq_s \frc{x_{i_s}} < 1,
\end{equation} 
 where $\preceq_i \, \in \{=, < \}$.
The integral point ${\bf n} \in \ZNaturals^{s+1}$ consists of the integral parts of the values of the clocks $x_0, x_1, \ldots, x_s$, and it indicates the lowest point in 
the $x_0 \cdots x_s$-space of the boundary of the region.  
Each region has a unique \emph{immediate time-successor}, which is the next region along the directional vector ${\bf 1}$, as long as no clock is reset on an event.

When the simplicial trail $S$ is $k$-dimensional then the immediate time-successor of an open $k$-simplex (a simplex of dimension $k$, $1 \leq k \leq r+1$) is a $(k-1)$-simplex and vice-versa, where each $(k-1)$-simplex is a face of its neighbouring $k$-simplices.
A region which is in the form of a $k$-simplex refers to the case where the fractional parts of the clocks are all non-zero, and then its immediate time-successor is a $(k-1)$-simplex, in which the integral part of the clock with maximal fractional part is increased by 1 while its fractional part is set to zero. The order between the other clocks remains as before.
The switch from a $k-1$-simplex into a $k$-simplex occurs when a clock of fractional part 0 turns into a positive fractional part and the order of the fractional parts of the clocks as well as their integral values remains as before.  
Thus, at each switch there is a cyclic shift in the fractional parts of the clocks, which results in a periodic sequence of simplices along a simplicial trail.

Let $ d_i \geq 0$ be the feasible \emph{duration} of the $i$-th event along a path $\gamma$.
That is, $d_i = M_i - m_i$, where $M_i$ is the supremum, over the runs along $\gamma$, of the time at which the $i$-th event of the run occurs, and $m_i$ is defined as the infimum of the same set.
In case of an automaton with a single clock $x$, if $x$ resets on this transition then the size of the temporal part of the timestamp of the $i$-th event increases by $d_i$, resulting in an increase in the width of the parallelogram that represents the trail of $\gamma$ after the $i$-th event, and possibly increasing the dimension of the trail from 1 to 2.
Otherwise, the width remains as before.
In case of multiple clocks, the dimension of the trail can increase, decrease or stay the same after an event with reset of clocks: clocks with the same fractional part can be separated, resulting in an increase of the dimension, while clocks whose fractional parts become identical (namely, $0$) contribute to a decrease of the dimension.

Let us look at a simple example of the trail and timestamp of a path in an automaton with a single clock.
\begin{example}
	In Fig.~\ref{fig:trail_and_ts_path}(a) a TA is drawn, and in Fig.~\ref{fig:trail_and_ts_path}(b) we see the trail and timestamp of the finite path $\gamma: (0) \xrightarrow{a} (1) \xrightarrow{b} (2) \xrightarrow{a} (3) \xrightarrow{a} (2)$, where '$a$-timestamp' refers to the projection on the $t$-axis of the elements $(t,a)$ of the timestamp, and similarly for '$b$-timestamp'.
	The first event occurs when $x = 1$ and the timestamp is $\{1\} \times \{a\}$.
	Then $x$ resets and the trail (a straight line of slope $1$) continues from the $t$-axis.
	Event 2 occurs when $1 \leq x \leq 3$ with timestamp $[2,4] \times \{b\}$ and a reset of $x$.
	After that event the trail is 2-dimensional (a parallelogram).
	Event $3$ occurs when $1 < x < 2$ without clock reset, and the orthogonal projection to the $t$-axis gives the timestamp $(3,6) \times \{a\}$ (here $(3,6)$ is the open interval $3 < t < 6$).
	The fourth event happens when $x = 3$ and its timestamp is $[5,7] \times \{a\}$.
	The timestamp of $\gamma$ is the union of the above sets, that is, $S_1 \times \{a\} \cup S_2 \times \{b\}$, with $S_1 = \{1\} \cup (3,7]$ and $S_2 = [2,4]$.
	\begin{figure}[htb]
		\centering
		\scalebox{0.5}{ \input 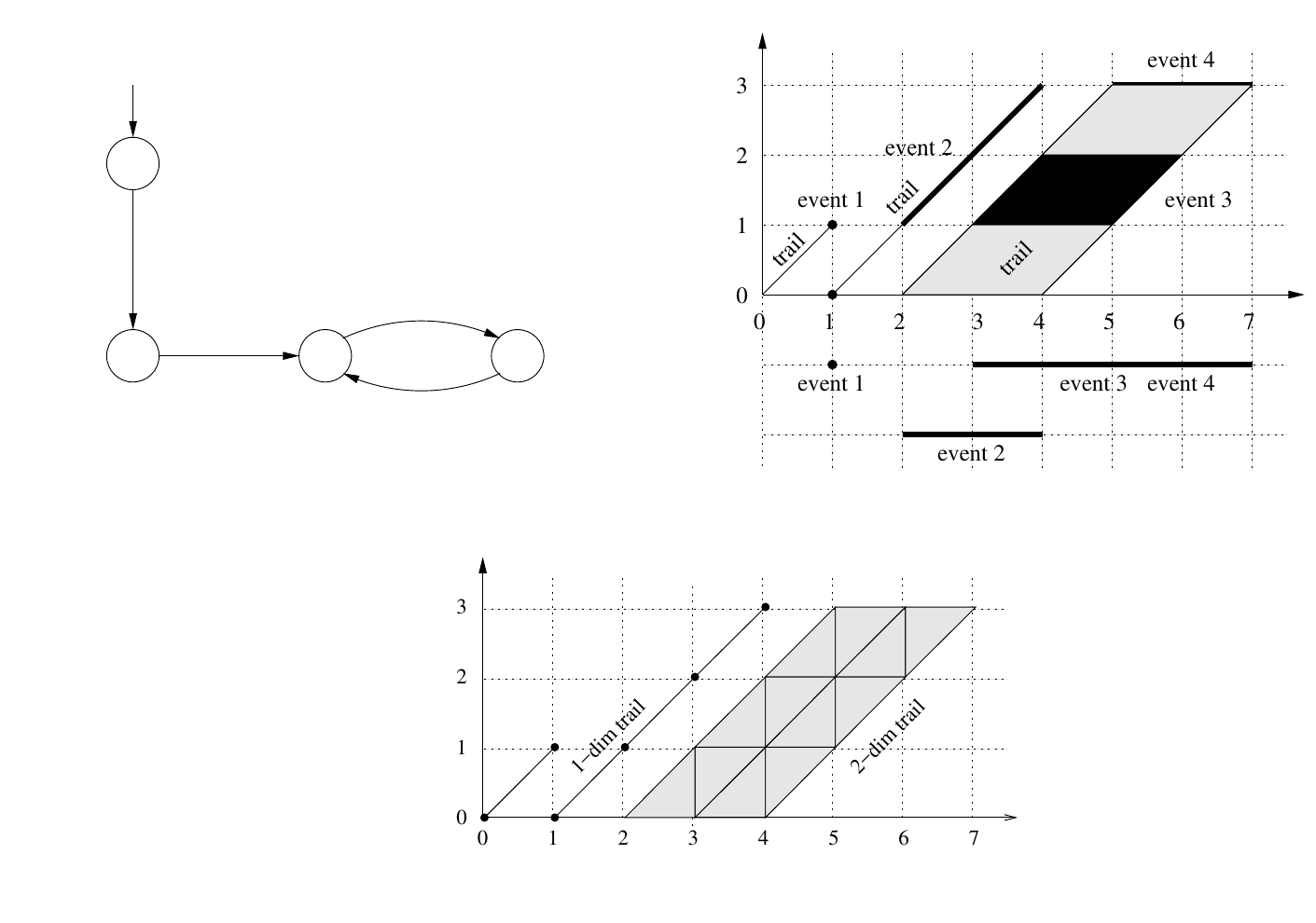_t }
		\caption{Trail, timestamp and regions of a path (single clock)}
		\label{fig:trail_and_ts_path}
	\end{figure} 
\end{example}
\begin{definition}[Timestamp of a run]
	The \emph{timestamp of a run} $\varrho$ is the set of pairs $(t_i, a_i) \in \PReals \times \Actions$ of the observable timed trace induced by $\varrho$.
\end{definition}
A finite path in $A$ has the form $\gamma = q_0 \tau_1 q_1 \tau_2 \cdots \tau_n q_n$ of alternating locations and transitions, and we always assume that $q_0$ is the initial location.
Such a path is an abstraction of a run since the temporal part is omitted.
Given a path $\gamma$ in $A$, there may be many possible runs along $\gamma$, and we say that $\gamma$ is \emph{feasible} when there is at least one run along it.
\begin{definition}[Timestamp of a path]
	The \emph{timestamp of a feasible path} $\gamma$ of $A$ is the union of the timestamps of all runs $\varrho$ along $\gamma$.
\end{definition}
Each instance of a transition along $\gamma$ is an \emph{event}.
That is, a transition is a static object which joins two locations of the TA, whereas an event refers to a specific occurrence of a transition within the path $\gamma$.
Hence, several events along a path may refer to the same transition of the TA.
\begin{definition}[Timestamp of an event in a path]
	The \emph{timestamp of an event} in a path $\gamma$ is the union of the timestamps of that event of all runs along $\gamma$.
	It is the part of the timestamp of the path that refers to that event.
\end{definition}
\begin{proposition}
\label{pr:mult_ev_timestamp}
The timestamp of each event is either a labeled integral point or a labeled (open, closed or half-open) interval between points $m$ and $n$, $m < n$, $m \in \ZNaturals$ and $n \in \Naturals \cup \infty$.
\end{proposition}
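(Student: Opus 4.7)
The plan is to show that the set $E_i$ of points $(t, x_1, \ldots, x_r) \in \Theta_\gamma$ at which event $i$ may fire is a convex polytope with integer-coefficient facet hyperplanes, so that its orthogonal projection onto the $t$-axis is a single interval whose endpoints lie in $\ZNaturals \cup \{\infty\}$; the timestamp of event $i$ is then exactly this projection labeled by $a_i$.

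First I would establish convexity and integrality of the portion $\Theta_\gamma^{<i}$ of the trail reached just before event $i$. Starting from $\{(0, \zerov)\}$, the trail is built by alternately applying three operations: time passage (Minkowski sum with the ray $\{\alpha {\bf 1} : \alpha \geq 0\}$), guard intersection (cutting by integer-coefficient half-spaces $x_j \sim n_j$), and clock reset (zeroing out a subset of clock coordinates). Each of these preserves convexity as well as the property that every defining hyperplane has integer coefficients. Hence $\Theta_\gamma^{<i}$, and therefore $E_i = \Theta_\gamma^{<i} \cap g_i$, is a convex polytope with integer-coefficient facets.

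Next I would locate the vertices of $E_i$ via the simplicial-trail decomposition already established above. Each simplicial trail $S = {\bf n} + \Delta + \{\alpha {\bf 1} : \alpha \geq 0\}$ has lattice vertices -- namely, the translates by ${\bf n} \in \ZNaturals^{r+1}$ of the vertices of the unit-cube simplex $\Delta$ -- and extends to infinity along the integer direction ${\bf 1}$. Cutting $S$ by a guard hyperplane $x_j = n_j$ either deletes some lattice vertices or meets the ${\bf 1}$-ray emanating from a lattice vertex $v$ at the integer offset $\alpha = n_j - v_j$, producing another lattice point. Consequently every vertex of every piece of $E_i$ lies in $\ZNaturals^{r+1}$, and any unbounded direction of $E_i$ is a positive multiple of ${\bf 1}$.

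Combining the two observations, the projection of $E_i$ onto the $t$-axis is an interval (by convexity of $E_i$) whose endpoints are either $t$-coordinates of lattice vertices of $E_i$, hence integers, or $+\infty$ (when $E_i$ is unbounded in the ${\bf 1}$-direction). Whether the interval is open, closed, or half-open at each end depends only on whether the associated relations $\sim \in \{<, \leq, =, \geq, >\}$ at the extremal faces are strict or weak, which does not affect the integrality of the endpoints. The main technical hurdle I anticipate is simultaneously preserving convexity and the integer-coefficient facet structure under the clock-reset operation; this is routine, since a reset is an axis-aligned coordinate projection composed with embedding into the zero hyperplane on the reset clocks, both of which send integer-coefficient constraints to integer-coefficient constraints, but it is the one step that calls for care.
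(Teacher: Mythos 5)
Your proposal is correct and follows essentially the same route as the paper: an induction along the events of the path in which the trail is a convex set swept in direction ${\bf 1}$ and decomposed into simplices with vertices on the $\ZNaturals^{r+1}$ lattice, so that intersecting with the (axis-parallel, integer-bounded) guard yields a convex event domain whose $t$-projection is an interval with integral or infinite endpoints, and resets preserve the inductive hypothesis. The only caution is that ``integer-coefficient facet hyperplanes'' alone would not force integral endpoints of the projection, but you repair this in your second step by deriving the lattice-vertex property from the simplicial-trail structure, which is exactly what the paper's argument relies on as well.
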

\begin{proof} 
	The trail of each path is composed of simplices as in \eqref{eq:simplex} residing on the integral grid.
	The intersection of such a simplex $\Delta$ with a domain satisfying a transition constraint of the form $x_i \sim n_i$, where $\sim \ \in \{<,\leq, =,\geq, >\}$, $n_i \in \ZNaturals$ is either the whole of $\Delta$ or the empty set.
	A possible reset of clocks $x_i$ during an event results in mapping $\Delta$ to another simplex $\Delta'$, which may be of smaller dimension.
	Thus, it suffices to show that the timestamp of a single simplex $\Delta$ is of the required form.
	But the temporal part of the timestamp of $\Delta$ is the set $n + S$, were $n \in \ZNaturals$ and $S$ is the set of values of the clock $t=x_0$ in $\Delta$.
	Since $S$ is either $\{0\}$ or the open interval $(0,1)$, we get that the timestamp of $\Delta$ is either an action-labeled integral point $\{n\}$ or an action-labeled open unit interval $(n,n+1)$.

	Another way of proving the claim is via linear programming.
	Suppose that a path $\gamma$ contains $r$ events and that the time of event $i$, $0\leq i \leq r$, is recorded by the variable $t_i$. Then we can represent $t_i$ as satisfying equalities and inequalities over the integers: instead of referring to a regular clock $x$ in the constraint of the $i$-th transition along $\gamma$, we refer to the variable $t_j$, where the $j$-th transition along $\gamma$ was the last time that the clock $x$ was reset. The result then follows by the fact that the corresponding maximum and minimum linear programming problems have integer solutions.	
\end{proof}	
\begin{definition}[Timestamp of a timed automaton]
	The timestamp $\TTTSSS(A)$ of a timed automaton $A$ is the set of all pairs $(t,a)$, such that an observable transition with action $a$ occurs at time $t$ in some run of $A$.
\end{definition}
\section{Augmented and Infinite Augmented Region Automaton}
\label{sec:IARA}
\subsection{Infinite Augmented Region Automaton}
Given a (finite) timed automaton $A$, the region automaton $\RRR(A)$ \cite{ta} is a finite \emph{discretized} version of $A$, such that time is abstracted and both automata define the same untimed language.
Each vertex in $\RRR(A)$ records a location $q$ in $A$ and a region $r$, which is either in the form of a simplex (as described in Section~\ref{sec:sing_path}) or an unbounded region, in which the value of at least one of the clocks is $\top$, meaning that it passed the maximal integer value $M$ that appears in the transition guards.
The regions partition the space of clock valuations into equivalence classes, where two valuations belong to the same equivalence class if and only if they agree on the clocks with $\top$ value and on the integral parts and the order among the fractional parts of the other clocks.
The edges of $\RRR(A)$ are labeled by the transition actions, and they correspond to the actual transitions that occur in the runs of $A$.
Using the time-successor relation over the clock regions (see \cite{ta}), the region automaton can be effectively constructed.
As shown in \cite{ta}, through the region automaton the questions of reachable locations and states of $A$ and the actions along the (possibly infinitely-many) paths that lead to these locations, i.e. the untimed language of $A$, become decidable.

Now we define the \emph{infinite augmented region automaton} $\IARRR(A)$.
First, we add to $A$ a clock $t$ that measures absolute time, does not appear in the transition guards, is never reset to $0$ and does not affect the runs and timed traced of $A$.
Next, we construct the region automaton augmented with $t$.
The construction is similar to the construction of the standard region automaton with respect to the regular clocks (all clocks except for $t$) and the maximal bound $M$, that is, the time regions of each regular clock $x_i$ are $\{0\}$, $(0,1)$, $\{1\},(1,2), \ldots, M, >M$, the latter being unbounded and refers to all values of $x$ greater than $M$.
The integration of the clock $t$ is as follows. The construction of regions is as usual by considering the integral parts and the order of the fractional parts of all clocks, including $t$. The only difference is that the integral part of $t$ is in $\ZNaturals$ and not bounded by $M$. Thus, the infinitely-many time-regions associated with $t$ are the alternating point and open unit interval: $\{0\}$, $(0,1)$, $\{1\}$, $(1,2), \ldots$ (see Fig.~\ref{fig:APTA_a}(b)).
Hence,  $\IARRR(A)$ contains information about absolute time that is lacking from the standard region automaton.
\begin{definition}[Infinite augmented region automaton]
	\label{def:inf_aug_region_automaton}
	Given $A \in \ntaeps$ extended with the clock $t$ that measures absolute time, a corresponding \emph{infinite augmented region automaton} $\IARRR(A)$ is a tuple $(V, v_0, E, \eActions)$, where:
	\begin{enumerate}
		\item $V$ is an infinite (in general) set of vertices of the form $(q, {\bf n}, \Delta)$, where $q$ is a location of $A$ and the pair $({\bf n}, \Delta)$ is a region, with
		\begin{equation}
			{\bf n} = (n_0, n_1, \ldots, n_s) \in \ZNaturals \times \{0, 1, \ldots, M, \top \}^{s}
		\end{equation}
		containing the integral parts of the clocks $t, x_1, \ldots, x_s$, and $\Delta$
		is the simplex defined by the order of the fractional parts of the clocks.
		\item $v_0 = (q_0, {\bf 0}, {\bf 0})$ is the initial vertex with $q_0$ the initial location of $A$ and with all clocks having integral part and fractional part equal to 0.
		\item $E$ is the set of edges.
		There is an edge
		\begin{equation}
		(q, r) \xrightarrow{a} (q',r')
		\end{equation}
		labeled with $a$ in $\IARRR(A)$ if and only if there is a run of $A$  which contains a timed transition followed by a discrete transition of the form
		\begin{equation}
		(q, v) \xrightarrow{d} (q, v + d) \xrightarrow{a} (q', v'),
		\label{eq:regtrans}
		\end{equation}
		such that the clock valuation $v$ over $t, x_1, \ldots, x_s$ represents a point in the region $r$ and the clock valuation $v'$ represents a point in the region $r'$.
		\item $\eActions = \Sigma \cup \{\epsilon\}$ is the finite set of actions that are edge labels.
	\end{enumerate}
\end{definition}
We note that there may be infinitely-many edges going-out of the same region in $\IARRR(A)$ (see Fig.~\ref{fig:APTA_a}(b)).
\begin{proposition}
	\label{prop:finite_IARA}
	For each positive integer $n$, one can effectively construct the part of $\IARRR(A)$ which contains all regions with $t \leq n$ and all in-coming edges of these regions. 
\end{proposition}
\begin{proof}
	There are finitely-many regions obeying the constraint $t \leq n$.
	These regions and their in-coming edges can be constructed the same way as a standard region automaton is constructed, starting with the initial location and proceeding step by step according to the immediate time-successor regions (which include the clock $t$) and according to the transitions of $A$.
	Indeed, the additional clock $t$ is only responsible for a finer partition of regions, but its introduction does not affect the transition guards of $A$.  
	Note also that since the clock $t$ never resets, there are no edges connecting regions with $t > n$ to regions with $t \leq n$.
	Hence, the number of edges of $\IARRR(A)$ restricted to $t \leq n$ is finite.   
\end{proof}

The benefit of introducing the clock $t$ into the region automaton is that we can know approximately at what absolute time an action occurs. 
For example, suppose that $A$ has a single clock $x$ and that $x$ is reset on a transition from location $q$ to location $q'$.
Then, in the corresponding region automaton the information about the time spent at location $q$ before moving to $q'$ is lost.
In $\IARRR(A)$, however, if we take the absolute time at which an action occurs to be $n+0.5$ when entering a region whose time-region (the value of $t$) is the open interval $(n, n+1)$, and the absolute time $n$ when entering a region whose time-region is exactly $t = n$, then it is possible to construct from it an (infinite) approximate timed automaton with a single clock and which differs from $A$ by at most $0.5$ time units at each action.

The timestamp of the TA $A$, denoted $\TTTSSS(A)$, is the union of the timestamps of all observable transitions of $A$, that is, the set of all pairs $(t,a)$, such that an observable transition with action $a$ occurs at time $t$ in some run of $A$.
We define also the timestamp of $\IARRR(A)$.
\begin{definition}[Timestamp of $\IARRR(A)$]
	The timestamp of $\IARRR(A)$, denoted $\TTTSSS(\IARRR(A))$, is the union of sets $s \times a$, where $s$ is a time-region of $t$ (an integral point $\{n\}$ or an open unit interval $(n,n+1)$) that is part of a region of a vertex of $\IARRR(A)$ and $a \in \Sigma$ is a label of an edge of $\IARRR(A)$ that is directed towards this vertex.
\end{definition}

\begin{proposition}
	\label{prop:eq_timestamp}
	$\TTTSSS(A) = \TTTSSS(\IARRR(A))$.
\end{proposition}
\begin{proof}
	By definition of the infinite augmented region automaton $\IARRR(A)$, its regions are exactly the clock-regions which are visited by runs of the TA $A$ extended with the clock $t$.
	In particular, the time-regions of $\IARRR(A)$ are the time-regions that are visited by the runs on the extended TA.
	Thus, $\TTTSSS(A) \subseteq \TTTSSS(\IARRR(A))$.
	By Proposition~\ref{pr:mult_ev_timestamp}, this is an equality since for each open interval $(n,n+1)$ representing absolute time that is visited in some run of $A$ on an action $a$, the set of all runs of $A$ cover all the points of this interval with the same action $a$.
\end{proof}
\subsection{Augmented Region Automaton}
A second construction is the augmented region automaton, denoted $\ARRR(A)$, in which we consider only the fractional part of $t$ and ignore its integral part.
$\ARRR(A)$ is a finite folding of $\IARRR(A)$, obtained by identifying vertices that contain the same data except for the integral part of $t$, and the corresponding edges. Thus, $t$ has only two time-regions: $\{0\}$ and $(0,1)$.
%
As a compensation, we assign weights to the edges of $\ARRR(A)$, as explained below.
  
\begin{definition}[Augmented region automaton]
	\label{def:aug_region_automaton}
	Given a non-deterministic timed automaton with silent transitions $A \in \ntaeps$, extended with the absolute-time clock $t$, a corresponding (finite) \emph{augmented region automaton} $\ARRR(A)$ is a tuple $(V, v_0, E, \eActions, W^*)$, where:
	\begin{enumerate}
		\item $V$ is the set of vertices.
		Each vertex is a triple $(q, {\bf n}, \Delta)$,	where $q$ is a location of $A$ and the pair $({\bf n}, \Delta)$ is a region, with
		\begin{equation}
			{\bf n} = (n_1, \ldots, n_s) \in \{0, 1, \ldots, M, \top \}^{s}
		\end{equation}
		containing the integral parts of the clocks $x_1, \ldots, x_s$, and $\Delta$
		is the simplex defined by the fractional parts of the clocks $t, x_1, \ldots, x_s$.
		\item $v_0 = (q_0, {\bf 0}, {\bf 0})$ is the initial vertex.
		\item $E$ is the set of edges.
		There is an edge $(q, r) \xrightarrow{a} (q',r')$ labeled with action $a$ if and only if there is a run of $A$  which contains a timed transition followed by a discrete transition of the form $(q, v) \xrightarrow{d} (q, v + d) \xrightarrow{a} (q', v')$, such that, when ignoring the integral part of the time measured by $t$, the clock valuation $v$ represents a point in the region $r$ and the clock valuation $v'$ represents a point in the region $r'$.
		
		\item $\eActions = \Sigma \cup \{\epsilon\}$ is the finite set of actions.
		\item $W^*$ is the set of weights on the edges.
		Each weight $m$, possibly marked with $`*`$, is $m = \lfloor t_1 \rfloor - \lfloor t_0 \rfloor \in [0..M]$, where $\lfloor t_1 \rfloor$ is the integral part of the value of $t$ in the target location and $\lfloor t_0 \rfloor$ - in the source location in the corresponding run of $A$.
	\end{enumerate}
\end{definition}
	There may be more than one edge between two vertices of $\ARRR(A)$, each one with a distinguished weight.
	A marked weight $m^*$ represents infinitely-many consecutive values $m, m+1, m+2, \ldots$ as weights between the same two vertices, with $m$ being the minimal value of such a sequence.
	It refers to a transition to or from a region $r$ in which all regular clocks have passed the maximal integer $M$ appearing in a transition guard.

\begin{example}
	In Fig.~\ref{fig:APTA_a}(a) we see a very simple TA $A$ containing a transition to an unbounded region.
	The corresponding infinite augmented region automaton $\IARRR(A)$ is shown in Fig.~\ref{fig:APTA_a}(b).
	Each vertex of $\IARRR(A)$ is represented by a rounded rectangle containing the original location of $A$ (circled, on the left), the integral values of $t$ and of $x$ (in the top of the rectangle) and the simplex (in the bottom).
	Notice that when the value of $x$ is greater than $M=0$ it is marked by $\top$ and its fractional part is ignored.
	To the left of $\IARRR(A)$ we see the discretization of time $t$ into time-regions, and each vertex of $\IARRR(A)$ is drawn in the level of its time-region.
	In  Fig.~\ref{fig:APTA_a}(c) the augmented region automaton $\ARRR(A)$ is shown.
	Here the integral part of the value of $t$ is ignored.
	The edge labeled by $0^*$ represents the infinitely-many differences in the integral parts of the values of $t$: $0, 1, 2, \ldots$.
	Similarly, the edge labeled with $1^*$ refers to the differences $1, 2, 3, \ldots$.
	\begin{figure}[t]
		\centering
		\scalebox{0.45}{ \input 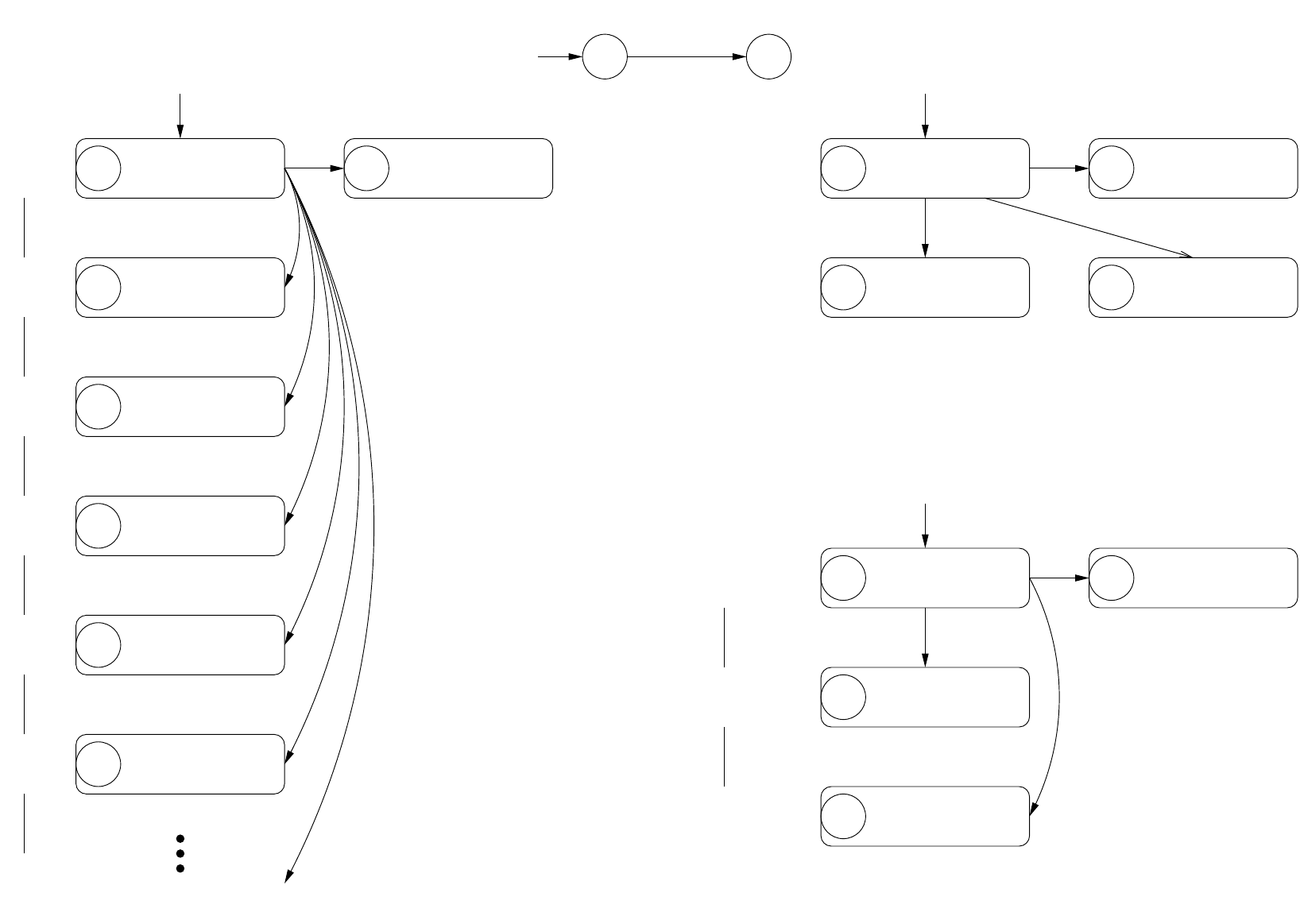_t }
		\caption{$(a)$ $A \in \ta$; $(b)$ The infinite augmented region automaton $\IARRR(A)$;  $(c)$ The augmented region automaton $\ARRR(A)$; $(d)$ A periodic augmented region automaton $\PARRR(A)$. Each rectangle represents a vertex containing the location of $A$ (circled, left), the integral values of $t$ and $x$ (top) and the simplex (bottom).}
		\label{fig:APTA_a}
	\end{figure}
\end{example}

	The languages $\LLL({\ARRR(A)})$ of $\ARRR(A)$ and $\LLL({\IARRR(A)})$ of $\IARRR(A)$  consist of all observable timed traces but, in contrast to the language $\LLL({A})$ of $A$, in each pair $(t_i,a_i)$ the time $t_i$ is not exact: it is either an exact integer $n$ or an arbitrary value of an interval $(n,n+1)$ that satisfies $t_i \geq t_{i-1}$.
	Thus, $\LLL({\ARRR(A)})$ and  $\LLL({\IARRR(A)})$ are less abstract than the untimed language $\LLL(\RRR(A))$ of the region automaton $\RRR(A)$ but are more abstract than $\LLL({A})$: one cannot, in general, distinguish between a transition that occurs without any time delay, e.g. when $x_i \geq 0$, and a transition that demands a time delay, e.g. when $x_i > 0$.
	When comparing $\LLL({\ARRR(A)})$ and $\LLL({\IARRR(A)})$ then, since $\ARRR(A)$ may be obtained from $\IARRR(A)$, it is clear that $\LLL({\ARRR(A)})$ cannot be less abstract than $\LLL({\IARRR(A)})$.	  
	But, in fact, these region automata are equally informative: for each positive integer $n$, one can effectively construct $\IARRR(A)$ up to time $t=n$, as in Proposition~\ref{prop:finite_IARA}, by unfolding $\ARRR(A)$ and recovering absolute time $t$ by summing up the weights of the edges along the taken paths.
	Indeed, since the transitions in $A$ do not rely on $t$, by taking the quotient of $\IARRR(A)$ by 'forgetting' the integral part of $t$, the only loss of information is the time difference in $t$ between the target and source regions, but then this information is regained in the form of weight on the corresponding edge of $\ARRR(A)$.
	Thus, we have the following.
\begin{proposition}
	\label{prop:eq_info}
	 $\LLL({\ARRR(A)}) = \LLL({\IARRR(A)})$.
\end{proposition}

As with $\IARRR(A)$, we can construct from $\ARRR(A)$ an approximate automaton, this time a finite and deterministic one, which approximates $A$ with a maximal error of $1/2$ time units at each observed transition.
This automaton has only one clock and this clock resets at every transition.
The maximal error $\epsilon = 1/2$ could be further reduced to $1/n$ by allowing transitions in the approximate automaton to occur at times $p/n$, $p \in \ZNaturals$ and only on such times.

\section{Eventual Periodicity}
\label{sec:per}
In this section we address the main topic of this paper: exploring the time-periodic property of $\ta$.
In addition to demonstrating its existence, we show how one can actually compute the parameters of a period.
\subsection{Non-Zeno Cycles in $\ARRR(A)$}
$\ARRR(A)$ is in the form of a finite connected directed graph with an initial vertex.
Every edge of $\ARRR(A)$ corresponds to a feasible transition in $A$ (contained in a run of $A$).
In what follows, a `path' in $\ARRR(A)$ is a directed path that starts at the initial vertex $g_0$, unless otherwise stated.
\begin{definition}[Duration of a path]
Given a path $\gamma$ in $\ARRR(A)$, its minimal integral \emph{duration}, or simply duration, $d(\gamma) \in \ZNaturals$ is the sum of the weights on its edges, where a weight $m^*$ is counted as $m$.
\end{definition}
\begin{definition}[(Non)-Zeno cycle]
A  cycle of $\ARRR(A)$ of duration $0$ is called a \emph{Zeno cycle} .
Otherwise, it is a \emph{non-Zeno cycle}.
\end{definition}
A path is called \emph{simple} if no vertex of it repeats itself, and we let $D$ be the maximal duration of a simple path in $\ARRR(A)$.
\begin{lemma}
\label{lem:reg_time}
There exists a minimal positive integer $t_{\textnormal{nz}} \leq D+1$, the non-Zeno threshold time, such that every path $\gamma$ of $\ARRR(A)$ that is of (minimal) duration $t_{\textnormal{nz}}$ or more contains a vertex belonging to some non-Zeno cycle.
\end{lemma}
\begin{proof}
Indeed, if $\ARRR(A)$ does not contain non-Zeno cycles then we can take $t_{\textnormal{nz}} = D+1$ and the claim holds vacuously.
So, suppose that $\ARRR(A)$ contains non-Zeno Cycles.
Then each path of duration $D+1$ must contain non-Zeno cycles because otherwise the Zeno cycles could have been removed, without changing the duration of the path, resulting in a simple path of duration $D+1$ - a contradiction.
\end{proof}
In order to compute $t_{\textnormal{nz}}$ we can explore the simple paths of $\ARRR(A)$, say in a breadth-first manner, up to the time $t_0$ in which each such path either cannot be extended to a path of a larger duration or any extension of it hits a vertex belonging to some non-Zeno cycle.
Then $t_{\textnormal{nz}}=t_0 +1$, which may be much smaller than $D+1$.
\subsection{A Period of $\ARRR(A)$}
A set $S$ is \emph{minimal} with respect to some property if for every element $e \in S$ the set $S \smallsetminus \{e\}$ does not satisfy the property.
\begin{definition}[Covering set of non-Zeno cycles]
A set $C$ of non-Zeno cycles of $\ARRR(A)$ is called a \emph{covering set of non-Zeno cycles} if every path $\gamma$ of $\ARRR(A)$ whose duration $d(\gamma)$ is at least $t_{\textnormal{nz}}$ intersects a cycle in $C$ in a common vertex.
\end{definition}
Without loss of generality, we may assume that a covering set of non-Zeno cycles is minimal.
\begin{definition}[Period of $\ARRR(A)$]
A time \emph{period} (or just period) $L$ of $\ARRR(A)$ is a common multiple of the set of durations $d(\pi)$, $\pi \in C$, for
some fixed (minimal) covering set of non-Zeno cycles $C$.
For convenience, we also set $L$ to be greater than $M$, unless $\ARRR(A)$ does not contain non-Zeno cycles, in which case we define $L$ to be 0.
\end{definition}

We remark that if we want to compute a minimal period $L>M$ we need to conduct a thorough exploration of the duration of cycles in $\ARRR(A)$, taking into account their common factors, but this computation is not needed for the results presented here.

\subsection{Eventual Periodicity of $\IARRR(A)$}
Let $t_{\textnormal{nz}}, C, L$ be as above, with $C$ fixed.
We denote by $\IARRR(A)|_{t \geq n}$ the subgraph of $\IARRR(A)$ that starts at time-level $n$, that is, the set of vertices of $\IARRR(A)$ with absolute time $t \geq n$ and their out-going edges.
\begin{definition}[$L$-shift in time]
	Given a subgraph $G$ of $\IARRR(A)$, an \emph{$L$-shift in time} of $G$, denoted $G+L$, is the graph obtained by adding the value $L$ to each value of the integral part of the clock $t$ in $G$ and leaving the rest of the data unaltered.
	We also denote by $V(G)+L$ the $L$-shift in time for the set of vertices of $G$, with $v+L$ in case $V = \{v\}$.
\end{definition}
\begin{lemma}
	\label{lem:forward_period}
	If $\IARRR(A)$ is not bounded in time then 
	$$\IARRR(A)|_{t \geq t_\textnormal{nz}} + L \subseteq \IARRR(A)|_{t \geq t_\textnormal{nz}+L}.$$
\end{lemma}
\begin{proof}
	First we show that the inclusion holds for the set of vertices of the above subgraphs.
	Let $\gamma$ be a path of $\IARRR(A)$ which terminates in a vertex $v_1 \in \IARRR(A)|_{t \geq t_\textnormal{nz}}$.
	Let $\gamma' = p(\gamma)$ be the image of $\gamma$ under the projection to $\ARRR(A)$.
	If $\gamma$ contains an edge $e_1$ whose image $e'_1=p(e_1)$ is labeled by a marked weight $m^*$ then we can replace $e_1$ by another edge $e_2 \in p^{-1}(e'_1)$ whose delay is greater by $L$ than the delay of $e_1$.
	So, suppose that $e_1$ starts in the vertex $u_1$ and terminates in $w_1$. Then $e_2$ starts in $u_1$ and terminates in the vertex $w_2=w_1+L$ and then the path continues as in $\gamma$ but with an $L$-shift in time, terminating in the vertex $v_2=v_1 +L$.
	Otherwise, no edge of $\gamma'$ has a marked weight.
	Since $d(\gamma) \geq t_{\textnormal{nz}}$ then by Lemma~\ref{lem:reg_time} and the definition of $L$, $\gamma'$ contains a vertex $v'$ that belongs to a non-Zeno cycle $\pi$ and whose duration is a factor of $L$.
	Hence, by a 'pumping' argument, we can extend $\gamma'$ with $L / d(\pi)$ cycles of $\pi$ that start and end in $v'$ and then reach the vertex $v_2 = v_1 +L$ in the pre-image in $\IARRR(A)$ of this extended path.
	
	The inclusion of the out-going edges follows from the fact that the out-going edges do not depend on the value of $t$.
\end{proof}
Let us denote by $V_k$, $k = 0,1,2, \ldots$, the set of vertices
$$V_k = V(\IARRR(A)|_{t \geq t_\textnormal{nz}+kL}) \smallsetminus V(\IARRR(A)|_{t \geq t_\textnormal{nz}+(k+1)L}).$$
\begin{theorem}
\label{th:eventual_period}
If the infinite augmented region automaton $\IARRR(A)$ is not bounded in time then it is eventually periodic: there exists an integral time $t_\textnormal{per} > 0$ such that 
$$\IARRR(A)|_{t \geq t_\textnormal{per}} + L = \IARRR(A)|_{t \geq t_\textnormal{per}+L}.$$
\end{theorem}
\begin{proof}
By Lemma~\ref{lem:forward_period}, $V_k + L \subseteq V_{k+1}$, for $k \geq 0$.
But there is a bound on the number of possible vertices of $V_k$ since $t$ is bounded, hence the sequence $V_k$ eventually stabilizes.
The result then follows since for the out-going edges the same argument given in the proof of Lemma~\ref{lem:forward_period} holds also here.
\end{proof}
When $\IARRR(A)$ is finite then we can set $t_\textnormal{per}$ to be $t_\textnormal{max} + 1$, where $t_\textnormal{max}$ is the maximal integral time of $\IARRR(A)$.
By the following proposition, a possible value for $t_\textnormal{per}$ can be effectively computed when $\IARRR(A)$ is infinite.
\begin{proposition}
\label{prop:cons_eq}
if $|V_k| = |V_{k+1}| = |V_{k+2}|$ for some $k$ then we can set
$t_\textnormal{per} = t_\textnormal{nz}+kL$.
\end{proposition}
\begin{proof}
The equalities $|V_k| = |V_{k+1}| = |V_{k+2}|$ are equivalent to $V_{k+1} = V_k +L$ and $V_{k+2} = V_{k+1} +L$. 
By induction, it suffices to show that these equalities imply that $|V_{k+2}| = |V_{k+3}|$, that is, $V_{k+3} = V_{k+2} +L$.
Let $v \in V_{k+3}$.
We need to show that there exists $v' \in V_{k+2}$ such that $v=v'+L$.

Suppose that $v$ is reached by an edge from a vertex $u \in V_{k+1} \cup V_{k+2}$.
Since $V_{k+2} = V_{k+1} +L = V_k + 2L$, there exists a vertex $u'=u-L \in V_{k} \cup V_{k+1}$ and this vertex is connected to a vertex  $v' = v-L \in V_{k+2}$.

Otherwise, $v$ is reached by an edge $e_1$ from a vertex $u$ in $V_k$ or earlier, and the time difference $d$ between $u$ and $v$ is greater than $2L$.
This implies that the projection $p(e_1) \in \ARRR(A)$ is of unbounded time delay $m^*$.
Since $L>M$ then $d-L>M$.
Hence, there is another edge $e_2$ in $\IARRR(A)$, that is also a pre-image of $p(e_1)$, and which joins $u$ to a vertex $v' \in V_{k+2}$, where $v=v'+L$.
\end{proof}
\begin{example}
	This example refers to the TA of Fig.~\ref{fig:complex2} (a).
	In order to make the analysis of its time-periodic structure simpler, we changed the guard on the transition from location $1$ to location $2$ to be simpler (Fig.~\ref{fig:complex} (a)), so that in the resulting infinite augmented region automaton $\IARRR(A)$ (Fig.~\ref{fig:complex} (b)) we can clearly see two different cycles of period $6$ (circled in dotted lines) (the edges with label $c$ are only partly shown).
	We then added the original guard between locations $1$ and $2$ (Fig.~\ref{fig:complex2} (a)).
	In the additional part in $\IARRR(A)$ (Fig.~\ref{fig:complex2} (b)) we see two more cycles, one of period $11$ and one of period $5$.
	We can still use a period of length $6$ for this more complex automaton, but the existence of cycles of other lengths results in a longer time until reaching the repeated periodic part of the entire automaton.
	\begin{figure}[t]
		\centering
		\scalebox{0.4}{ \input 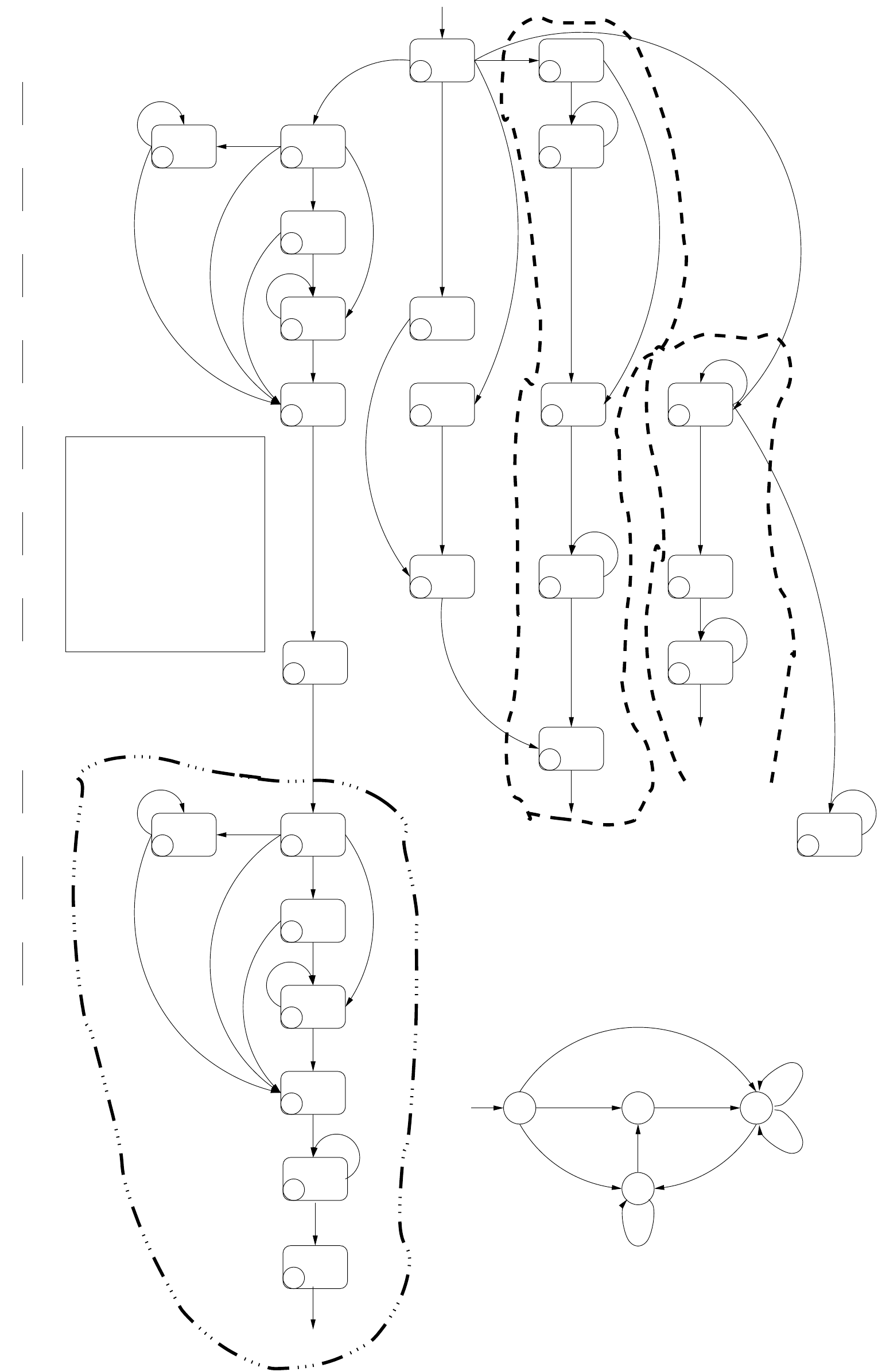_t }
		\caption{(a) The simplified $A \in \ntaeps$; (b) $\IARRR(A)$ with period $6$}
		\label{fig:complex}
	\end{figure}
	\begin{figure}[t]
		\centering
		\scalebox{0.4}{ \input 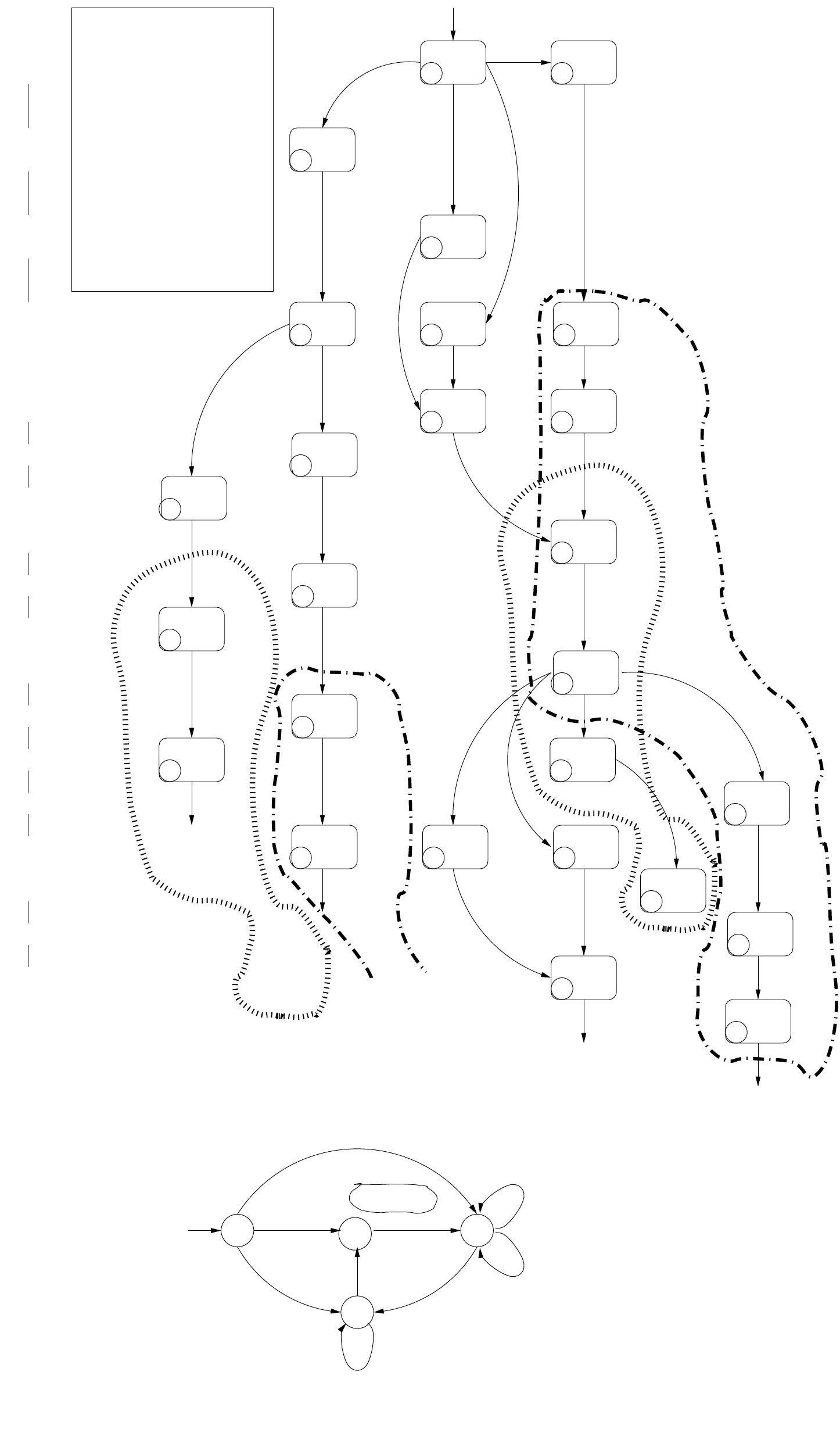_t }
		\caption{(a) The original $A \in \ntaeps$; (b) The additional part of $\IARRR(A)$ with cycles of lengths $11$ and $5$}
		\label{fig:complex2}
	\end{figure}
\end{example}

As is known, a TA may be totally non-periodic in the sense that no single timed trace of it is eventually periodic (see Example~\ref{ex:non-period}).
However, a special kind of periodicity, which we call \emph{suffix-periodicity}, holds between different timed traces, as shown in the following theorem.
\begin{theorem}
\label{th:lang_eventual_period}
If $A \in \ntaeps$ is not bounded in time then its language $\LLL(A)$ is suffix-periodic:
if $t_r > t_\textnormal{per}$ and
$$\lambda = (t_1, a_1),\ldots,  (t_{r-1}, a_{r-1}), (t_r, a_r), (t_{r+1}, a_{r+1}),\ldots, (t_{r+m}, a_{r+m})$$
is an observable timed trace of $\LLL(A)$ then, for each $k \in L\Integers$, if $t_r + k > t_\textnormal{per}$ then there exists an observable timed trace
$\lambda' \in \LLL(A)$ such that $$\lambda' = (t'_1, a'_1),\ldots,  (t'_s, a'_s), (t_{r}+k, a_r), (t_{r+1}+k, a_{r+1}),\ldots, (t_{r+m}+k, a_{r+m}).$$
\end{theorem}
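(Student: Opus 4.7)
My plan is to realize $\lambda$ as a path in the infinite augmented region automaton $\IARRR(A)$, isolate the sub-path corresponding to positions $r$ through $r+m$, and apply the periodicity of Theorem~\ref{th:eventual_period} to translate this sub-path by $K$ time units.

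First, I would fix a run $\varrho$ of $A$ inducing $\lambda$ and lift $\varrho$ to a path $\gamma$ in $\IARRR(A)$; every delay-and-transition in $\varrho$, silent or observable, corresponds to an edge of $\gamma$. Let $g_r$ denote the node of $\gamma$ entered immediately before firing the $r$-th observable action; its first coordinate equals $\lfloor t_r \rfloor \geq t_{per}$, so $g_r$ belongs to the stabilized range identified by Theorem~\ref{th:eventual_period} and Lemma~\ref{lem:cons_eq}.

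Next, let $h_r$ be the twin of $g_r$ obtained by adding $K$ to its $t$-coordinate while leaving every other coordinate, including the fractional-part simplex $\Delta$, unchanged; this is legitimate because $K \in L\Integers$ is an integer multiple of $L$, so the fractional parts are invariant under the shift. By iterating Theorem~\ref{th:eventual_period} $|K|/L$ times (forward if $K>0$, backward if $K<0$), the subgraph of $\IARRR(A)$ rooted at $g_r$ is isomorphic, via the pure $t$-shift by $K$, to the subgraph rooted at $h_r$. Transporting the tail of $\gamma$ through this isomorphism produces a path $\gamma_{\mathrm{suf}}$ starting at $h_r$ whose edge labels and internal delays mirror those of the original tail, so its observable readout is exactly $(t_r+K, a_r), \ldots, (t_{r+m}+K, a_{r+m})$.

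Finally, because $t_r + K > t_{per}$, the node $h_r$ lies in some $\bar G_{k'}$ with $k' \geq k_{per}$, and every node of a stabilized stratum is, by the unfolding construction of $\IARRR(A)$, reached from the initial node by some initial path $\gamma_0$. The underlying run of $\gamma_0$ in $A$ induces an observable timed trace $(t'_1, a'_1), \ldots, (t'_s, a'_s)$ whose times are all at most $t_r + K$, and concatenating $\gamma_0$ with $\gamma_{\mathrm{suf}}$ yields a run of $A$ realizing the required $\lambda'$. The main delicate point is the case $K<0$, where one needs a strictly shorter initial path reaching $h_r$; this rests on the fact that once $\bar G_k$ has stabilized each of its nodes arises only from an actually reachable configuration at time level in that stratum, so $h_r$ is visited by some initial path of total time in $[t_0+k'L, t_0+(k'+1)L)$. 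A secondary concern is that the isomorphism must transport silent transitions faithfully as well, but this follows verbatim from Lemma~\ref{lem:forward_period}, whose shift preserves the action label on each edge.
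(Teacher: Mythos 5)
Your overall skeleton matches the paper's proof: lift the run to a path in $\IARRR(A)$, locate the node reached at the $r$-th observable transition inside the stabilized part, use Theorem~\ref{th:eventual_period} to produce its $K$-shifted twin together with a reachable prefix, and glue a shifted copy of the suffix onto that prefix. The handling of the prefix (reachability of every node of a stabilized stratum, including the case $K<0$) is also fine.

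There is, however, a genuine gap at the step where you write that transporting the tail of $\gamma$ through the graph isomorphism yields a path ``whose edge labels and internal delays mirror those of the original tail, so its observable readout is exactly $(t_r+K,a_r),\ldots,(t_{r+m}+K,a_{r+m})$.'' The isomorphism of Theorem~\ref{th:eventual_period} is an isomorphism between subgraphs of $\IARRR(A)$, i.e.\ it operates at the level of regions; a path in $\IARRR(A)$ carries no real-valued delays and has no ``exact readout.'' What the isomorphism gives you is only that the \emph{region} $(q,(\lfloor t_r\rfloor+K,n_1,\ldots,n_r),\Delta)$ is reachable by some region-level path. To replay the original suffix with its original delays $d_{r+1},\ldots,d_{r+m}$ (which is what forces the exact times $t_i+K$), you must first reach the \emph{specific concrete state} whose global time is exactly $t_r+K$ and whose regular-clock values coincide with those of $\varrho$ at time $t_r$ --- not merely some state in the same region. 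This requires the additional fact that the set of concrete states reached along a path is a union of entire regions, so that reaching a region means reaching every point of it; the paper supplies exactly this via Proposition~\ref{pr:mult_ev_timestamp} and its proof (``if a region is reached at some time then it can be reached at any other time within the same region''). Once that concrete state is attained, the suffix delays can indeed be replayed verbatim, since no guard mentions $t$ and the regular clocks agree. Without this bridge from the region abstraction back to concrete runs, your concluding sentence about the exact readout is asserted rather than proved.
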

\begin{proof}
Suppose that $\lambda$ is the observable timed trace of some run $\varrho$ of $A$.
This run corresponds to a path in $\IARRR(A)$ whose $r$-th transition reaches a vertex $v$ with some time-region $\alpha$ with $t_r \in \alpha$. 
By Theorem~\ref{th:eventual_period} there exists a path $\gamma$ in $\IARRR(A)$ which reaches a vertex $u=v+k$.
That is, if $v  = (q, (n_0, n_1, \ldots, n_s), \Delta)$ then $u$ is identical to $v$ except for the integral part of $t$, which is increased by $k$: 
$u =  (q, (n_0 + k, n_1, \ldots, n_s), \Delta)$, or, in other words, the time-region $\alpha'$ of $u$ is a translate by $k$ of the time-region $\alpha$ of $v$.
Hence, since $t_r \in \alpha$ then $t_r + k \in \alpha'$.
As we saw in Section~\ref{sec:sing_path}, the trail of the path $\gamma$ (the union of the trajectories along $\gamma$) is composed of regions in the form of simplices.
Thus, for every value of the time-region $\alpha'$, in particular for $t_r + k$,   there exists a run $\varrho'$ of $A$ which reaches location $q$ at the exact time $t_{r}+k \in \alpha'$ on an observable action $a_r$.
From that time on, the run $\varrho'$ can imitate the behavior of $\varrho$ by keeping a time difference $k$ in the taken transitions.
The result then follows.
\end{proof}
\section{Periodic Augmented Region Automaton}
\label{sec:APARA}
After revealing the periodic structure of $\IARRR(A)$, it is natural to fold it into a finite graph according to this period, which we call periodic augmented region automaton and denote by $\PARRR(A)$.
The construction of $\PARRR(A)$ is done by first taking the subgraph of $\IARRR(A)$ of time $t < t_\textnormal{per} + L$ and then folding the infinite subgraph of $\IARRR(A)$ of time $t \geq t_\textnormal{per} + L$ onto the subgraph of time $t_\textnormal{per} \leq t < t_\textnormal{per} + L$, which becomes the periodic subgraph.
Thus, each vertex of the periodic subgraph represents infinitely-many vertices of $\IARRR(A)$.
Similarly, the out-going edges of the periodic subgraph are periodic edges.
In addition, some of the edges of $\PARRR(A)$ are marked with ($*$) or ($*+$), as explained below.
For an edge $e$, we denote by $\iota(e)$ and $\tau(e)$ the initial, resp. terminal, vertex of $e$. 
\begin{definition}[Periodic augmented region automaton]
	\label{def:per_aug_region_automaton}
	Given an infinite augmented region automaton $\IARRR(A)$ with period $L$ and periodicity starting time $t_\textnormal{per}$, a finite projection $p(\IARRR(A))$ of it, called \emph{periodic augmented region automaton} and denoted $\PARRR(A)$, is a tuple $(V, v_0, E, \eActions, B)$, where:
	\begin{enumerate}
		\item $V$ is the set of vertices, with $v_0 = (q_0, {\bf 0}, {\bf 0})$ the initial vertex.
		For each $v \in \PARRR(A)$, if $u \in p^{-1}(v) \subseteq \IARRR(A)$ then $u$ equals $v$ in all fields, except possibly for the integral part of $t$.
		If $v.\lfloor t \rfloor < t_\textnormal{per}$ then $u=v$ and $v$ is a \emph{regular} vertex.
		Otherwise, $v$ is a \emph{periodic} vertex, $v.\lfloor t \rfloor$ is written as $n + L \ZNaturals$, for some $t_\textnormal{per} \leq n < t_\textnormal{per} + L$, $p^{-1}(v)$ is infinite and $\{u.\lfloor t \rfloor \, | \, p(u)=v\} = \{n+kL \, | \, k = 0, 1, 2, \ldots \}$.
		\item $E$ is the set of edges, which are the projected edges of $\IARRR(A)$ under the map $p$.
		Each edge joining two vertices of $\IARRR(A)$ is mapped to an edge with the same action label that joins the projected vertices. 
		Some of the edges are marked with a symbol of $B = \{(*),(*+)\}$.
		The description below is technical and refers to the different types of edges that occur when folding $\IARRR(A)$: whether the source of the edge is a regular \upshape{(\textbf{R})} or a periodic \upshape{(\textbf{P})} vertex (in the latter case the preimage in $\IARRR(A)$ contains infinitely-many edges, one from each of the preimage vertices), whether it is unmarked \upshape{(\textbf{U})} or marked \upshape{(\textbf{M})} (in the latter case there are infinitely-many edges starting from each of the vertices in the preimage source vertices), and finally the plus sign (\textbf{+}) represents the case where in the preimage the target vertices are not of value $n$ but $n+L$.
		\begin{itemize}
			\item \textnormal{\textbf{UR :}} (unmarked, regular) If $e \in \PARRR(A)$ is unmarked and $\iota(e)$ is regular then $\iota(e).\lfloor t \rfloor = n_1 < t_\textnormal{per}$, $\tau(e).\lfloor t \rfloor = n_2$ or   $\tau(e).\lfloor t \rfloor = n_2 + L \ZNaturals$ and $p^{-1}(e)=\{e'\}$, with $\iota(e').\lfloor t \rfloor=n_1$ and $\tau(e').\lfloor t \rfloor=n_2$.
			\item \textnormal{\textbf{UP :}} (unmarked, periodic) If $e \in \PARRR(A)$ is unmarked and $\iota(e)$ is periodic then $\iota(e).\lfloor t \rfloor = n_1 + L\ZNaturals$, $\tau(e).\lfloor t \rfloor = n_2 + L\ZNaturals$, $t_\textnormal{per} \leq n_1, n_2 < t_\textnormal{per} + L$ and the preimage of $e$ in $\IARRR(A)$ are the infinitely-many edges satisfying the following.
			If $n_1 \leq n_2$ then $p^{-1}(e) = \{e' \,|\, \iota(e').\lfloor t \rfloor = n_1 +kL, \tau(e').\lfloor t \rfloor = n_2 +kL, k=0,1,2,\ldots\}$, 
			and if $n_1 > n_2$ then $p^{-1}(e) = \{e' \,|\, \iota(e').\lfloor t \rfloor = n_1 +kL, \tau(e').\lfloor t \rfloor = n_2 +(k+1)L, k=0,1,2,\ldots\}$.
			\item \textnormal{\textbf{MR :}} (marked, regular) If $e \in \PARRR(A)$ is marked with $`(*)`$ and $\iota(e)$ is regular, with $\iota(e).\lfloor t \rfloor = n_1$ and $\tau(e).\lfloor t \rfloor = n_2$ or   $n_2 + L \ZNaturals$, then $p^{-1}(e) = \{e' \,|\, \iota(e').\lfloor t \rfloor = n_1, \tau(e').\lfloor t \rfloor = n_2 +kL, k=0,1,2,\ldots\}$, that is, infinitely-many edges starting from the same vertex.
			\item \textnormal{\textbf{MP :}} (marked, periodic) If $e \in \PARRR(A)$ is marked with $`(*)`$ and $\iota(e)$ is periodic, with $\iota(e).\lfloor t \rfloor = n_1 + L \ZNaturals$ and $\tau(e).\lfloor t \rfloor = n_2 + L \ZNaturals$, then its preimage in $\IARRR(A)$ contains all the edges according to both rules \textnormal{\textbf{UP}} and \textnormal{\textbf{MR}}. 
		\item \textnormal{\textbf{MP+ :}} (marked, periodic, shifted) If $e \in \PARRR(A)$ is marked with $`(*+)`$
		then the same rules that apply to an edge marked with $`(*)`$ hold, except that the target vertices are of $L$-shift in time compared to those of an edge marked with $`(*)`$. 
		\end{itemize}
		\item $\eActions = \Sigma \cup \{\epsilon\}$ is the finite set of actions.
	\end{enumerate}
\end{definition}
We remark that instead of periodic time interval of type $[a,b)$ we can define it analogously to be of type $(a,b]$ as in Fig.~\ref{fig:APTA_a}(d), where the periodic time is $(0,1]$.
\begin{example}
	\label{ex:non-period}
	The TA shown in Fig.~\ref{fig:ta3}(a) is taken from \cite{ta}, where it demonstrates non-periodicity: the time difference between an $a$-transition and the following $b$-transition is strictly decreasing along a run.
	However, the periodicity among the collection of timed traces is seen in the periodic augmented region automaton, where the period here is of size 1, and the vertices in times $(2,3)+\ZNaturals$ and $3+\ZNaturals$ are periodic.
	Notice also that there are edges marked with ($*$) which represent infinitely-many edges with the same source.
	\begin{figure}[htb]
		\centering
		\scalebox{0.5}{ \input 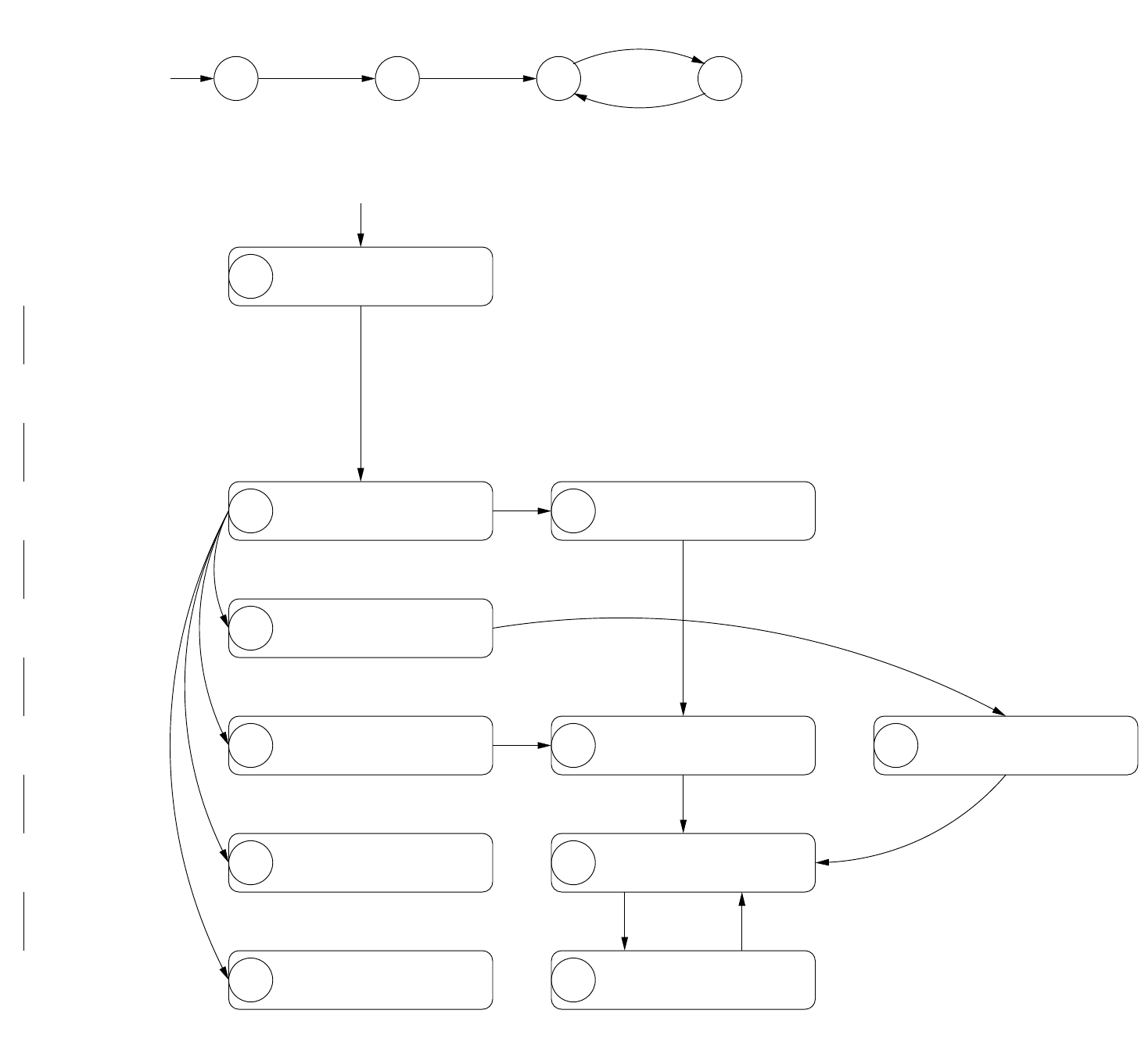_t }
		\caption{a) $A \in \ta$ ; b) $\PARRR(A)$, a periodic augmented region automaton of $A$}
		\label{fig:ta3}
	\end{figure}
\end{example} 
\begin{proposition}
	\label{prop:well_def}
	$\PARRR(A)$ is well-defined and as informative as $\IARRR(A)$.
\end{proposition}
\begin{proof}
	Clearly, since $\PARRR(A)$ may be obtained from $\IARRR(A)$ then it cannot be more informative.
	It suffices then to show that for each positive integer $n$, $\IARRR(A)$ can be effectively constructed from $\PARRR(A)$ up to time $t=n$.
	Well, for $t < t_\textnormal{per}$, $\PARRR(A)$ is identical to $\IARRR(A)$.
	Then, by Theorem~\ref{th:eventual_period}, the graph of $\IARRR(A)$ becomes periodic in the sense that the subgraph of time $t_\textnormal{per} \leq t < t_\textnormal{per} + L$ repeats itself, except for the integral part of $t$, which progresses indefinitely in $\IARRR(A)$ but can be expressed modulo the period $L$, as is done in $\PARRR(A)$.
	Indeed, since the transitions in $A$ do not rely on $t$, by taking the quotient of $\IARRR(A)$ modulo $L$ from time $t \geq t_\textnormal{per}$, the only loss of information is the exact time difference in $t$ between the target and source regions.
	But due to the periodicity in $\IARRR(A)$, this information can be finitely presented.
	Hence, since the edges of $\IARRR(A)$ whose initial vertices are of time $t \geq t_\textnormal{per} + L$ are translates of similar edges that start at time $t_\textnormal{per} \leq t < t_\textnormal{per} + L$, it suffices to examine the latter.
		
	So, let $e$ be an edge of $\IARRR(A)$ which joins a vertex $u$ of integral time $\lfloor t \rfloor = n_1$, $t_\textnormal{per} \leq n_1 < t_\textnormal{per} + L$, with a vertex $v$ of integral time 
	$\lfloor t \rfloor = n_2$, and suppose that $n_1 \leq n_2 \mod L$.
	Suppose also that $u$ is not joined to a vertex $v' = v-L$. 
	Then, since $L > M$ there are only two cases: either $n_2 - n_1 < L$ or $L \leq n_2 - n_1 < 2L$.
	In order to distinguish between these cases, the latter case is marked by a plus sign that is added to the corresponding edge of $\PARRR(A)$ from a vertex of integral time $\lfloor t \rfloor = n_1 + L\ZNaturals$ to a vertex of integral time $\lfloor t \rfloor = t_\textnormal{per} + ((n_2 - t_\textnormal{per}) \mod L) + L\ZNaturals$.
	When $u$ is also connected to a vertex $v' = v-L$ then we let $v"$ be of minimal integral time modulo $L$ to which $u$ is connected, that is $v" = v-iL$, for some $i > 0$, and there is no edge from $u$ to $v" - L$ (here $v, v'$ ans $v"$ are identical except for the integral time of $t$).
	If $v"$ is of integral time $n$ then necessarily $u$ is connected to infinitely-many vertices of integral time $n + kL$, $k \geq 0$, and all these edges are captured in $\PARRR(A)$ by marking with $(*)$ the edge from the corresponding vertex of integral time $\lfloor t \rfloor = n_1 + L\ZNaturals$ to a vertex of integral time $\lfloor t \rfloor = t_\textnormal{per} +((n - t_\textnormal{per}) \mod L) + L\ZNaturals$. 
	
	The case where $n_1 > n_2 \mod L$ is handled similarly.   
	It is now clear that in order to construct $\IARRR(A)$ up to time $t=n$ we only need to unfold $\PARRR(A)$ up to this time by obeying the above rules.
\end{proof}
\subsection{Complexity}
Let $N = N(\ARRR(A))$ denote the number of vertices in the augmented region automaton $\ARRR(A)$.
If $\kappa$ denotes the number of clocks, including the absolute clock $t$, $\lambda$ the number of locations in $A$ and $\mu=M+2$, where $M$ is the maximal integer appearing in a guard of $A$, then
\begin{equation}
N \leq \lambda (2\mu)^{\kappa} \kappa!.
\end{equation}
Indeed, the number of combinations of the integral values of the clocks is bounded by $\mu^{\kappa}$ (in fact, $t$ is assigned a single value), there are $\kappa!$ different orderings of the fractional parts of the clocks $\{x_i\}$, and the term $2^{\kappa}$ refers to all possibilities of inequality or equality between each pair of adjacent $\{x_i\}, \{x_j\}$ in an ordering.

Let us look now at the number of vertices in $\PARRR(A)$.
At each time-level the number of vertices is bounded by $N$.
Since $t_{\textnormal{nz}} \leq MN$ then there are at most $MN^2$ vertices of time $t \leq t_{\textnormal{nz}}$.
After passing $t_{\textnormal{nz}}$ we have the subgraphs $\bar{G}_k$ of time length $L$, where $L$ is the period.
Each such subgraph has at most $N L$ vertices.
Since the number of vertices in the subgraphs forms an almost increasing sequence (until an equality occurs two consecutive times), the number of vertices from time $t_{\textnormal{nz}}$ to time $t_\textnormal{per}$ is bounded by $(N L)^2$. 
Thus, the number $N(\PARRR(A))$ of vertices in $\PARRR(A)$ satisfies
\begin{equation}
N(\PARRR(A)) \leq (L^2 + M)N^2 (1+o(1))
\label{eq:nr_vertices}
\end{equation}
as $N \to \infty$.

The largest factor in \eqref{eq:nr_vertices} may come from the period $L$, so let us compute an upper bound of $L$.
$L$ is the least common multiple of the durations $d(\pi)$ of cycles that form a covering set of non-Zeno cycles.
For each such cycle $\pi$, $d(\pi) \leq MN$ since  the length of a simple cycle is bounded by the number $N$ of vertices in $\ARRR(A)$ and the time difference between two vertices along a path is at most $M$.
Thus, a bound on $L$ is given by the least common multiple of $1,2, \ldots, MN$, which is by the prime number theorem
\begin{equation}
L \leq \lcm(1,2, \ldots, MN) = e^{MN(1+o(1))}
\label{eq:L_bound}
\end{equation}
as $MN \to \infty$.

\begin{example}
	When computing the period $L$, in the worst case the numbers $d(\pi)$ are pairwise prime and the vertices of the cycles $\pi$ form a disjoint union of sets which (almost) covers the set of vertices of $\ARRR(A)$.
	So, suppose that $\ARRR(A)$ is in the form of $n$ simple cycles, where each cycle is connected to the initial vertex by an additional edge.
	Suppose also that the length of cycle $i$ is $p_i$, the $i$-th prime number, $i = 1, \ldots, n$.
	Let us assume that $M=1$ and each edge is of weight 1.
	The number of vertices in $\ARRR(A)$ is $N = 1 + \sum_{i = 1}^{n} p_i \sim (1/2) n^2 \log n$.
	Then $L= \lcm \{ p_1, \ldots, p_n \} = \prod_{i = 1}^{n} p_i = e^{n \log n (1+o(1))}$, the primorial $p_n\#$.
	This upper bound is closer to $e^{M\sqrt{N}}$ than to the bound $e^{MN}$ of \eqref{eq:L_bound}. 
\end{example}
\section{The Timestamp}
\label{sec:timestamp}
Recall that the timestamp $\TTTSSS(A)$ of a timed automaton $A$ is the set of all pairs $(t,a)$, such that an observable transition with action $a$ occurs at time $t$ in some run of $A$.
\begin{theorem}
\label{th:timestamp_eventual_period}
The timestamp of a TA $A$ is a union of action-labeled integral points and open unit intervals with integral end-points.
It is either finite or forms an eventually periodic (with respect to time $t$) subset of $\PReals \times \Actions$ and is effectively computable.
\end{theorem}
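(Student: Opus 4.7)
I would combine the local shape of the timestamp of a single event (Proposition~\ref{pr:mult_ev_timestamp}) with the eventual periodicity of the infinite augmented region automaton (Theorem~\ref{th:eventual_period}), and then read off both parts of $\TTTSSS(A)$ from the finite folded object $\PARRR(A)$ built in Section~\ref{sec:APARA}.

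\emph{Shape of $\TTTSSS(A)$.} By Proposition~\ref{pr:mult_ev_timestamp}, the timestamp of every single observable event along any path of $A$ is either a labeled integer point or a labeled interval with integer endpoints, possibly unbounded above. Any such interval decomposes canonically into finitely many labeled open unit intervals $(n,n+1)\times\{a\}$ together with a finite collection of labeled integer points $\{n\}\times\{a\}$. Since $\TTTSSS(A)$ is by definition the union over all observable events in all runs of such contributions, it is itself a union of labeled open unit intervals with integer endpoints and labeled integer points, which is the first part of the theorem.

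\emph{Eventual $L$-periodicity.} A pair $(t,a)\in\TTTSSS(A)$ with $a\in\Actions$ corresponds precisely to a path in $\IARRR(A)$ whose final edge carries label $a$ and whose target node has a time region containing $t$. Theorem~\ref{th:eventual_period} provides integers $t_{per}$ and $L$ such that, for every $t\geq t_{per}$, the subgraph of $\IARRR(A)$ rooted at time-level $t$ is isomorphic to the subgraph rooted at time-level $t+L$ via the shift $n_0\mapsto n_0+L$ on the integer part of the global clock. This isomorphism sends the final edge of any realising path to an edge with the same label whose target time region is shifted by $L$, so $(t,a)\in\TTTSSS(A)$ if and only if $(t+L,a)\in\TTTSSS(A)$ whenever $t\geq t_{per}$, which is precisely eventual periodicity.

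\emph{Effectiveness and main obstacle.} For effectivity I would construct $\PARRR(A)$ level by level, stopping once the level sets $\bar G_k$ coincide for two consecutive $k$; by Lemma~\ref{lem:cons_eq} this guarantees equality for all larger $k$, so the procedure terminates. Scanning every observable edge of $\PARRR(A)$ and reading off the time region and action label of its target then yields the labeled open unit intervals and labeled integer points contributing to $\TTTSSS(A)$: non-periodic edges produce the prefix on $[0,t_{per})$, while each periodic or starred observable edge whose target has time region of the form $n+L\ZNaturals$ produces the infinite family obtained by shifting by all multiples of $L$. The most delicate point is the equivalence in the previous paragraph: a witness run for $(t+L,a)$ cannot be obtained by literally time-shifting a witness for $(t,a)$, since that would push the initial state into negative time, but must be synthesised by pumping a cycle of type wZ1 whose duration divides $L$ inside the already-explored prefix, exactly as in the proof of Lemma~\ref{lem:forward_period}. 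Once that bidirectional witness construction is in place, the labeled structure of the first paragraph combined with the shift symmetry of the second delivers the theorem.
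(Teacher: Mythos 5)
Your proposal is correct and follows essentially the same route as the paper, which derives the theorem directly from Proposition~\ref{pr:mult_ev_timestamp} (shape of each event's timestamp) and Theorem~\ref{th:eventual_period} (eventual periodicity of $\IARRR(A)$), with effectiveness coming from the construction of $\PARRR(A)$ via Lemma~\ref{lem:cons_eq}. Your write-up simply makes explicit the details the paper leaves implicit, including the correct observation that the $+L$ witness is synthesised by pumping a wZ1 cycle as in Lemma~\ref{lem:forward_period} rather than by literally shifting a run.
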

\begin{proof}
	By Theorem~\ref{th:lang_eventual_period}, if the timestamp is not finite then it becomes periodic, with period $L$, after time $t = t_\textnormal{per}$.
	Thus, if it can effectively be computed up to time $t_\textnormal{per}+L$, then in order to find whether there is an observable transition with action $a$ at time $t_\textnormal{per}+L+t$ one only needs to check the timestamp at time $t_\textnormal{per}+ (t \mod L)$.
	
	By Proposition~\ref{pr:mult_ev_timestamp}, the timestamp up to time $t_\textnormal{per}+L$ is a finite number of labeled integral points and open intervals between integral points and by Proposition~\ref{prop:finite_IARA}, it is effectively computable.	
\end{proof}

The timestamp of a TA is an abstraction of its language: it does not preserve the timestamps of single timed traces.
However, the timestamp is eventually periodic and computable, hence the timestamp inclusion problem is decidable.
Thus, due to the general undecidability of the language inclusion problem in non-deterministic timed automata, one may use the timestamp for refutation purpose.
\begin{corollary}
Given two timed automata $A,B \in \ntaeps$ over the same alphabet (action labels), the question of non-inclusion of their timestamps is decidable, 
thus providing a decidable sufficient condition for the (in general, undecidable) question of non-inclusion of their languages: $\LLL(A) \nsubseteq \LLL(B)$.
\end{corollary}

The timestamp is easily extracted from $\PARRR$ (in fact, it is enough to take the subgraph of $\IARRR$ up to level $t_\textnormal{per}+ L$).
We just form the union of the time-regions up to level $t_\textnormal{per}+ L$, where each time-region is either a point $\{n\}$ or an open interval $(n,n+1)$, along with the labels of the actions of the in-going edges.
The timestamp in the interval $t_\textnormal{per} \leq t < t_\textnormal{per}+ L$ then repeats itself indefinitely. 
\begin{definition}
	For each $a \in \Actions$, let $A_a$ be the restriction of $A$ to $a$-actions, obtained by substituting each $b \in \Actions \smallsetminus \{a\}$ with $\epsilon$, representing the silent transition.
\end{definition}
Thus, the language of $A_a$ is the 'censored' language of $A$, which is the outcome of deleting from each word (timed trace) all pairs $(b,t)$, $b \neq a$.
\begin{example}
	The timestamp of the $a$-transitions of the automaton of Fig.~\ref{fig:ta3} is $\TTTSSS(A_a) = \Naturals$, and that of the $b$-transitions is $\TTTSSS(A_b) = [1,\infty)$.
\end{example}
\subsection{Timestamp Automaton}
Given a TA $A$, one can effectively construct a deterministic TA $\tilde{A}$, called a \emph{timestamp automaton} of $A$ with the same timestamp as that of $A$.
Such as automaton is decomposable into the timestamp automata of the automata $A_a$.
\begin{definition}[Timestamp automaton]
	Given a timed automaton $A \in \ntaeps$, a timestamp automaton $\tilde{A}$ is a deterministic (finite) timed automaton with a single clock and with timestamp identical to that of $A$.
	It is the union of the timestamp automata $\tilde{A}_a$, $a \in \Actions$, having a common initial vertex.
	Each $\tilde{A}_a$ has the form of a single path $\tilde{\gamma}_a$ of positive length, which may end in a loop $\tilde{\pi}_a$, thus giving $\tilde{A}$ the form of a bouquet.
\end{definition}
\begin{theorem}
	\label{th:timestamp_contruct}
	Given a timed automaton $A \in \ntaeps$, one can effectively construct a timestamp automaton $\tilde{A}$.
\end{theorem}
\begin{proof}
We construct $\tilde{A}_a$ by following the ordered connected components (intervals) of the timestamp $\TTTSSS(A_a)$ (here 'interval' includes also singletons $\{n\}$).
To each such time interval corresponds the next transition guard in $\tilde{\gamma}_a$, where the lower and upper constraint on the clock $x$ in the transition guard are exactly the left and right end-points of the interval.
In case $\TTTSSS(A_a)$ contains a finite number of intervals (possibly the last interval of infinite length) then we are done.

Otherwise, $\TTTSSS(A_a)$ contains infinitely-many intervals, which form an eventually periodic sequence with respect to the sizes of the intervals and the distances among them. 
Then we need to attach a loop at the end of $\tilde{\gamma}_a$.
We distinguish between two cases.

Case (i): The periodic part of $\TTTSSS(A_a)$ contains an integral point $n$ (not necessarily as an isolated point).
Then we first split the interval, say $[a,b)$, to which $n$ belongs into disjoint intervals $[a,n), \{n\}, (n,b)$, such that the point $n$ belongs to a singleton. 
Then we extend $\tilde{\gamma}_a$ until reaching $\{n\}$, so that the last transition of $\tilde{\gamma}_a$ is constrained to $x = n$ while resetting $x$. 
From that point begins the loop $\tilde{\pi}_a$, which obeys the same rules as applied to $\tilde{\gamma}_a$, with $x$ being reset only when finishing the loop (see Fig.~\ref{fig:ts_autom} (a)).

Case (ii): The periodic part of $\TTTSSS(A_a)$ does not contain an integral point, that is, it is a union of open unit intervals $(n,n+1)$.
Then, if necessary, we split the last interval before starting the loop into two with the second component a unit open interval (we know that this last interval is not a singleton).
This unit interval refers to the last transition of $\tilde{\gamma}_a$ and we reset $x$ on that transition. 
Then, all transitions within the loop $\tilde{\pi}_a$ are forced to occur at integral times, with $x$ being reset when completing the whole loop (see Fig.~\ref{fig:ts_autom} (b)) (hence, in both cases the clock $x$ is reset in each $\tilde{A}_a$ only on a transition to the vertex $v_a \in \tilde{\gamma}_a \cap \tilde{\pi}_a$).
The idea is that if we enter the loop at a fractional time, say $c = 0.3$, then all the next transitions will take place at times $n + 0.3$, but since $c$ can be arbitrarily chosen within the open interval $(0,1)$ then the set of all runs will cover the entire timestamp.
\end{proof}
%
\begin{example}
\label{ex:ts_autom}
Let $A$ be a TA with timestamp
$$
\begin{array}{l}
\TTTSSS(A_a) = (1,3] \cup \{5\} \cup (6 + ([0,2) \cup \{3\} \cup (8,18)) + 21\ZNaturals) \times \{ a \}, \\
\TTTSSS(A_b) = [0,1] \cup (2,4) \cup \{5\} \cup (6 + ((0,1) \cup (1,2) \cup (5,6) \cup (8,9)) + 10\ZNaturals) \\
\hspace{17 mm} \times \{ b \}, \\
\TTTSSS(A_c) = [1,4] \cup \{6\} \cup (10, \infty) \times \{ c \}.
\end{array}
$$
Then a possible timestamp automaton of $A$ is given in Fig.~\ref{fig:ts_autom}. 
\begin{figure}[htb]
\centering
\scalebox{0.5}{ \input 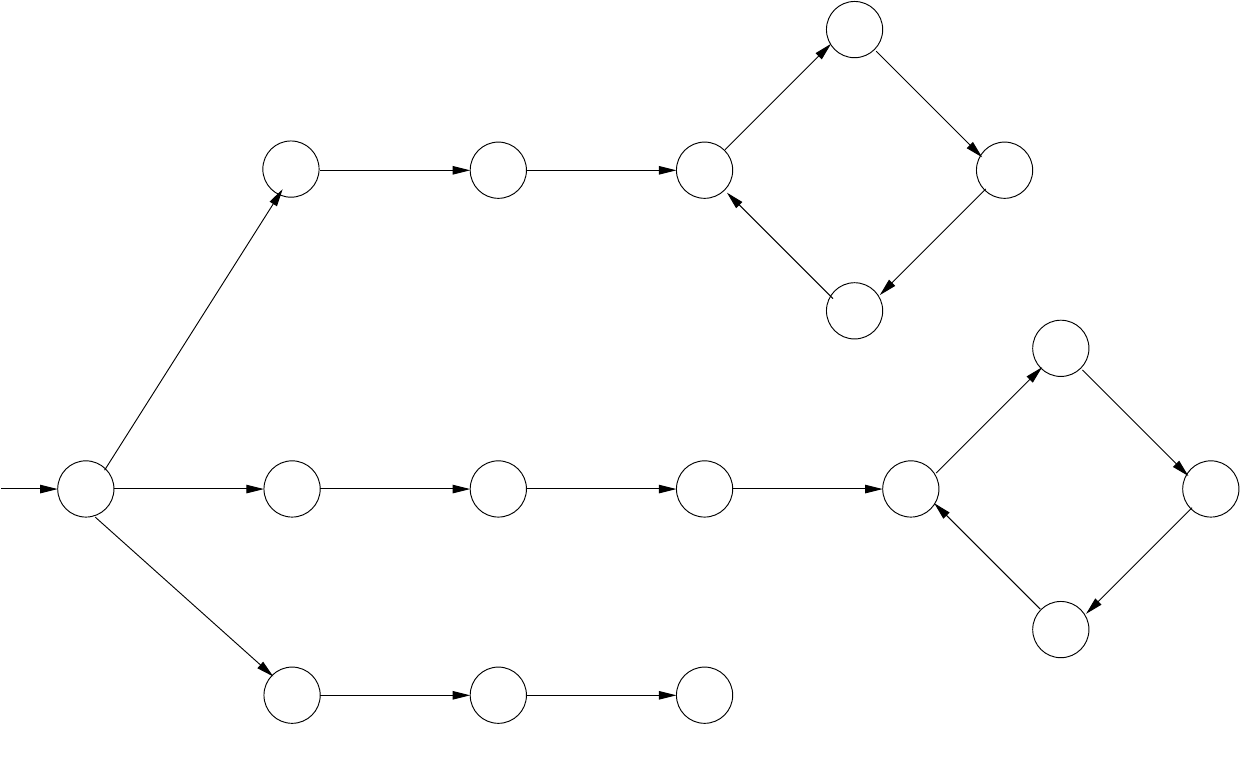_t }
\caption{Timestamp automata of a) $\TTTSSS(A_a)$; b) $\TTTSSS(A_b)$; c) $\TTTSSS(A_c)$}
\label{fig:ts_autom}
\end{figure}
\end{example} 
\begin{figure}[htb]
\centering
\scalebox{0.5}{ \input 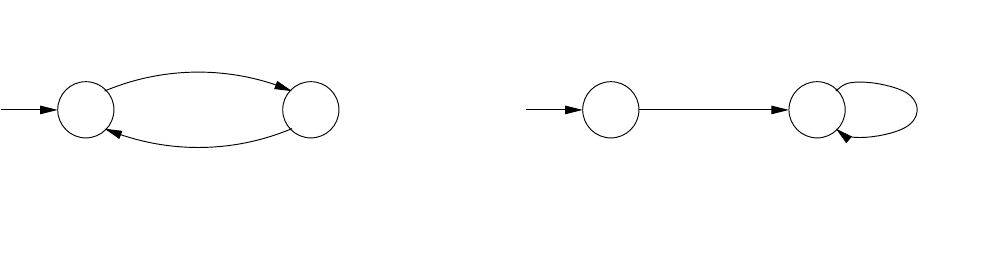_t }
\caption{a) A non-determinizable $A \in \ntaeps$ ; b) A timestamp automaton $\tilde{A}$}
\label{fig:time_stamp1}
\end{figure} 
\begin{example}
The language of the TA $A \in \ntaeps$ of Fig.~\ref{fig:time_stamp1} (a) is
$$\LLL(A) = \{ (t_0, a), (t_1, a),\ldots, (t_n, a)\, | \, i < t_i < i+1, i=0,\ldots,n-1, n \in \ZNaturals \}$$ (supposing all locations are `accepting').
The timestamp of $A$ is the set of all positive non-integral reals: $\TTTSSS(A) = \PReals \smallsetminus \ZNaturals$.
$A$ is not determinizable. Each transition occurs between the next pair of successive natural numbers.
The guard of each such transition must refer to a clock which was reset on some previous integral time.
But since all transitions occur on non-integral time, the only clock that can be referred to is a clock $x$ that is reset at time $0$ and hence the transition guards need to be of the form $n < x < n+1$ for each $n \in \ZNaturals$, which makes the automaton infinite.
Nevertheless, the timestamp automaton associated with $A$, seen in Fig.~\ref{fig:time_stamp1} (b), is deterministic.
\end{example}
\section{Conclusion and Future Research}
The timestamp of a non-deterministic timed automaton with silent transitions ($\ntaeps$) consists of the set of all action-labeled times at which locations can be reached by observable transitions.
The problem of computing the timestamp is a generalization of the basic reachability problem, a fundamental problem in model checking, thus being of interest from the theoretical as well as from the practical point of view.
In this paper we showed that the timestamp can be effectively computed, also when the timed automata are non-deterministic and include silent transitions.

One of the major problems in testing and verification of abstract models of real-time systems is the inclusion of the language of one timed automaton in the language of another timed automaton.
This problem is, in general, undecidable.
Thus, since (non)-inclusion of timestamps of timed automata is a decidable problem, we have a tool which provides a sufficient condition for language non-inclusion in timed automata. However, the timestamp may be seen as overly abstract since it does not take into account the order in which events occur.
Another property to be considered is complexity.
We did not try to find here an efficient algorithm for the construction of the timestamp, e.g. by replacing regions with time-periodic structures like zones or other symbolic representations \cite{MPS11}
and this can be the subject of possible future research.

\subsection*{Acknowledgements.} 
\begin{small}
	This research was partly supported by the Austrian Science Fund (FWF) Project P29355-N35. 
\end{small}
\bibliography{ta}
\bibliographystyle{amsplain}
\end{document}